\newcommand{\D}{\mathrm{d}}
\newcommand{\I}{\mathrm{i}}
\renewcommand{\vec}[1]{\boldsymbol{#1}}
 \newtheorem{theorem}{Theorem}[section]
 \newtheorem{corollary}[theorem]{Corollary}
 \theoremstyle{definition}
 \theoremstyle{remark}
 \newtheorem{remark}[theorem]{Remark}
 \numberwithin{equation}{section}
\begin{document}\hyphenation{Cou-lomb}

%-------------------------------------------------------------------------
% editorial commands: to be inserted by the editorial office
%
%\firstpage{1} \volume{228} \Copyrightyear{2004} \DOI{003-0001}
%
%
%\seriesextra{Just an add-on}
%\seriesextraline{This is the Concrete Title of this Book\br H.E. R and S.T.C. W, Eds.}
%
% for journals:
%
%\firstpage{1}
%\issuenumber{1}
%\Volumeandyear{1 (2004)}
%\Copyrightyear{2004}
%\DOI{003-xxxx-y}
%\Signet
%\commby{inhouse}
%\submitted{March 14, 2003}
%\received{March 16, 2000}
%\revised{June 1, 2000}
%\accepted{July 22, 2000}
%
%
%
%---------------------------------------------------------------------------
%Insert here the title, affiliations and abstract:
%

%----------Author 1

\title{Self-Adjoint Extensions of Dirac Operator with Coulomb Potential\footnote{This work was partially supported by the 2014-2017 MIUR-FIR grant ``\emph{Cond-Math: Condensed Matter and Mathematical Physics}'' code RBFR13WAET.}}
% Use \titlerunning{Short Title} for an abbreviated version of
% your contribution title if the original one is too long
%\author{Matteo Gallone% and Name of Second Author
%}
% Use \authorrunning{Short Title} for an abbreviated version of
% your contribution title if the original one is too long
%\institute{Matteo Gallone \at SISSA, Via Bonomea 265, 34136 Trieste - Italy, \email{matteo.gallone@sissa.it}
%\and Name of Second Author \at Name, Address of Institute \email{name@email.address}
%}
%
% Use the package "url.sty" to avoid
% problems with special characters
% used in your e-mail or web address
%

\author{Matteo Gallone\footnote{International School for Advanced Studies -- SISSA, via Bonomea 265, 34136 Trieste, Italy. e-mail 
\texttt{mgallone@sissa.it}}}% $\,$ and Alessandro Michelangeli\footnote{International School for Advanced Studies -- SISSA, via Bonomea 265 34136 Trieste, Italy.
%e-mail 	\texttt{alemiche@sissa.it}}}

\maketitle

\abstract{In this note we give a concise review of the present state-of-art for the problem of self-adjoint realisations for the Dirac operator with a Coulomb-like singular scalar potential $V(\vec x)=\phi(\vec x) I_4$. We try to follow the historical and conceptual path that leads to the present understanding of the problem and to highlight the techniques employed and the main ideas. In the final part we outline a few major open questions that concern the topical problem of the multiplicity of self-adjoint realisations of the model, and which are worth addressing in the future.}

%Definisci da qualche parte chi è l'operatore di Coulomb-Dirac

\section{Introduction}
In relativistic quantum mechanics one is interested in the study of Dirac equation, a partial differential equation that describes the dynamics of a $\frac{1}{2}$--spin fermion. The phase space of the physical system is the Hilbert space $\mathcal{L}^2:=L^2(\mathbb{R}^3,\mathbb{C}^4,\D^3x)$ which is
\begin{equation}
\mathcal{L}^2 := \left\{ u\, |\, u : \mathbb{R}^3 \to \mathbb{C}^4, \, \Vert u \Vert_{\mathcal{L}^2} < \infty \right\},
\end{equation}
where if $u=(u_1,u_2,u_3,u_4)$, with $u_j:\mathbb{R}^3 \to \mathbb{C}$, the norm is defined as $\Vert u \Vert_{\mathcal{L}^2}^2 = \int_{\mathbb{R}^3}\sum_{j=1}^4 |u_j(\vec x)|^2 \, \D^3 x$. 

The minimal Dirac operator is defined by
\begin{equation}
T = \boldsymbol\alpha \cdot \vec{p}+\beta +V(\vec x)
\end{equation}
on the compactly supported smooth functions:
\begin{equation}
\mathcal{D}(T)=\mathcal{C}^\infty_c :=C^\infty_c(\mathbb{R}^3\setminus\{0\};\mathbb{C}^4),
\end{equation}
where $\vec p=-\I \vec \nabla$ is the momentum operator, $ \boldsymbol\alpha = (\alpha_1,\alpha_2,\alpha_3)$, $\alpha_j$ and $\alpha_4=\beta$ are $4\times4$ Hermitian matrices which satisfy the anti-commutation relation
\begin{equation}
\alpha_j\alpha_k+\alpha_k\alpha_j=2 \delta_{jk} I_{4},
\end{equation}
$I_n$ is the $n\times n$ identity matrix and $V(x)$ is a real $4 \times 4$ matrix valued function called \emph{potential}. A standard form for the $\alpha$ matrices is the following
\begin{equation}
\alpha_j=\left(\begin{array}{cc} 0 & \sigma_j \\ \sigma_j & 0 \end{array}\right), \qquad \beta=\left(\begin{array}{cc}
I_2 & 0 \\ 0 & -I_2
\end{array}\right),
\end{equation}
where $\sigma_j$ are the Pauli matrices
\begin{equation}
\sigma_1=\left(\begin{array}{cc} 0 & 1 \\ 1 & 0 \end{array}\right),
\qquad
\sigma_2=\left(\begin{array}{cc} 0 & -i \\ i & 0 \end{array}\right),
\qquad
\sigma_3=\left(\begin{array}{cc} 1 & 0 \\ 0 & -1 \end{array}\right).
\end{equation}

In these notes we are interested in real scalar potentials of the form $V(\vec x)=\phi(\vec x) I_4$ which have a Coulomb-like singularity at the origin, namely $\lim_{\vec x \to 0} |\vec x| \phi(\vec x)= \nu \in \mathbb{R}$. For the sake of simplicity we will denote by $T_0$ the free Dirac operator and will refer to the operator $T=T_0+V$ as the \emph{Dirac-Coulomb} operator. A natural choice for $\phi$ is the Coulomb potential
\begin{equation}\label{eq:CoulombPot}
\phi(\vec x)= \frac{\nu}{|\vec x|},
\end{equation} 
and this means that one is modelling an electron subject to the electric field generated by $\nu$ positive charge in the origin. 

In atomic models $\nu$ is related to the atomic number by
\begin{equation}
\nu=\frac{Z}{\alpha},
\end{equation}
where $Z$ is the atomic number and $\alpha$ is the fine-structure constant $\alpha \sim 137$.

In the case of a multi-electron atom one can use some kind of screening approximation and an effective potential which is still Coulomb-like but it loses some properties like the spherical symmetry. This makes the study of self-adjoint extensions physically interesting also in the case of potentials with non spherical symmetry.

We collect in the following theorem what is known about the existence and uniqueness of self-adjoint extensions of the minimal Dirac-Coulomb operator \cite{Weid,S72,S72bis,KW79,X99,H13}.

\begin{theorem}[Self-adjoint extensions of the minimal Dirac-Coulomb operator]\label{th:1}
Let $T=T_0+V$ be a Dirac-Coulomb operator defined on $\mathcal{C}^\infty_c$ with $V(\vec x)=\phi(\vec x)I_4$ and
\begin{equation}
\lim_{\vec x \to 0} |\vec x| \phi(\vec x) = \nu.
\end{equation}
Then:
\begin{itemize}
\item[i) ]  if $|\nu|<\frac{\sqrt{3}}{2}$, then the operator $T$ is essentially self-adjoint and its unique self-adjoint extension has domain
\begin{equation}
\mathcal{D}(\bar T) = \mathcal{H}^1:=H^1(\mathbb{R}^3,\mathbb{C}^4,\D^3 x).
\end{equation}
\item[ii) ]  If $\frac{\sqrt{3}}{2} <|\nu| < 1$, then the operator $T$ has infinitely many self-adjoint extensions and if $\phi(\vec x)$ is bounded below or above there exists a unique distinguished extension $T_d$ with the properties
\begin{equation}\label{eq:IntrEn}
\mathcal{D}(T_d)\subset \mathcal{D}(|\vec x|^{-1/2}), \qquad \mathcal{D}(T_d)\subset \mathcal{D}(|T_0|^{1/2}).
\end{equation}
\item[iii) ]  If $|\nu|>1$, then there are infinitely many self-adjoint extensions of $T$.
\end{itemize}
\end{theorem}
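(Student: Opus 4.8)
The plan is to exploit the spherical symmetry of the leading Coulomb singularity. First I would treat the purely radial model $\phi(\vec x)=\nu/|\vec x|$ and reduce the general Coulomb-like case to it by a perturbation argument, since the difference $\phi(\vec x)-\nu/|\vec x|$ is subordinate to the singular part and should affect neither essential self-adjointness nor the deficiency indices, both of which are determined locally at the origin. For the radial model, the spin–orbit operator $K=\beta(\vec\Sigma\cdot\vec L+I_4)$ commutes with $T$ and has eigenvalues $\kappa\in\mathbb{Z}\setminus\{0\}$; diagonalising $K$ together with $\vec J^2$ and $J_3$ decomposes $\mathcal{L}^2$ into a countable orthogonal sum of channels on each of which $T$ acts as a radial operator
\begin{equation}
h_\kappa=\begin{pmatrix} 1+\frac{\nu}{r} & -\frac{\D}{\D r}+\frac{\kappa}{r} \\[4pt] \frac{\D}{\D r}+\frac{\kappa}{r} & -1+\frac{\nu}{r}\end{pmatrix}
\end{equation}
on $L^2((0,\infty),\mathbb{C}^2,\D r)$. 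Self-adjointness of $T$ is then equivalent to that of $\bigoplus_\kappa h_\kappa$, and its deficiency indices are the sums of those of the $h_\kappa$.

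Second, I would apply Weyl's limit-point/limit-circle analysis to each $h_\kappa$. At $r=+\infty$ the mass term forces the limit-point case, so no boundary condition is needed there. At $r=0$ I would substitute a trial solution $u\sim r^s$ into the system with only the singular $1/r$ terms retained; the resulting indicial equation is $s^2=\kappa^2-\nu^2$, so the two Frobenius exponents are $s=\pm\gamma$ with $\gamma=\sqrt{\kappa^2-\nu^2}$. Testing square-integrability near the origin, $r^{-\gamma}\in L^2(0,\varepsilon)$ precisely when $\mathrm{Re}\,\gamma<\tfrac12$, so $h_\kappa$ is limit-point at $0$ iff $\gamma\ge\tfrac12$ and limit-circle otherwise, the latter including the imaginary-$\gamma$ regime. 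The decisive channel is the one of smallest $|\kappa|$, namely $|\kappa|=1$, where $\gamma=\sqrt{1-\nu^2}$.

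Third, I would assemble the three regimes. For $|\nu|<\sqrt3/2$ one has $\gamma>\tfrac12$ in every channel, so all $h_\kappa$ are limit-point at both ends, hence essentially self-adjoint, and so is $T$; the identification $\mathcal{D}(\bar T)=\mathcal{H}^1$ follows once one knows $V$ maps $\mathcal{H}^1$ into $\mathcal{L}^2$ by Hardy's inequality, together with a sharp Hardy–Dirac estimate ensuring the graph norm of $T$ controls the $\mathcal{H}^1$-norm throughout this range (the naive Hardy bound alone only reaches $|\nu|<\tfrac12$). For $\sqrt3/2<|\nu|<1$ the finitely many channels with $|\kappa|=1$ become limit-circle while all others stay limit-point; each deficient channel contributes deficiency index at least one, so $T$ has equal, finite, strictly positive deficiency indices and hence a continuum of self-adjoint extensions. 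Here $\gamma\in(0,\tfrac12)$ is real, so the two local solutions $r^{\pm\gamma}$ are of genuinely different regularity and the distinguished extension $T_d$ is singled out by the boundary condition selecting the less singular behaviour $r^{+\gamma}$. For $|\nu|>1$ the $|\kappa|=1$ channel has imaginary $\gamma$, its two solutions oscillate like $r^{\pm\I|\gamma|}$ with equal modulus, the channel is limit-circle, and $T$ again admits infinitely many extensions.

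The hard part will be case (ii), specifically constructing $T_d$ and verifying the form-domain inclusions in \eqref{eq:IntrEn}. Two difficulties arise. First, as soon as $|\nu|\ge\tfrac12$ the Coulomb term ceases to be Kato-small relative to $T_0$, so Kato–Rellich is unavailable and one must invoke a sharp Hardy–Dirac inequality to tame the critical $1/r$ singularity; this is exactly the coercivity that forces the relevant form domain into $\mathcal{D}(|\vec x|^{-1/2})$ and $\mathcal{D}(|T_0|^{1/2})$. Second, selecting $T_d$ among the continuum of extensions and proving it is semibounded requires the hypothesis that $\phi$ be bounded above or below, which renders the associated quadratic form closable and semibounded so that a Friedrichs-type construction applies. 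I expect the sign condition on $\phi$ to be the subtle technical point: it is precisely what loses its analogue once $|\nu|>1$, where $\gamma$ turns imaginary, there is no ``less singular'' solution to match, and the mechanism distinguishing a canonical extension collapses---consistent with the theorem asserting only the existence of infinitely many extensions in case (iii).
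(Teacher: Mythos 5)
Your channel decomposition and Weyl limit-point/limit-circle analysis of the radial operators is exactly the paper's route to the deficiency indices (Section 2, following Weidmann), and your exponent bookkeeping ($\gamma=\sqrt{\kappa^2-\nu^2}$, limit point at $0$ iff $\gamma\geq\tfrac12$, decisive channel $|\kappa|=1$) is correct. But two of your load-bearing steps have genuine gaps. First, the reduction of the general Coulomb-like $\phi$ to the pure radial model ``by a perturbation argument'' does not go through as stated: the hypothesis only gives $\phi(\vec x)-\nu/|\vec x|=o(1/|\vec x|)$ at the origin, and in the critical regime $|\nu|\geq\tfrac12$ the graph norm of the radial Dirac--Coulomb operator no longer controls $\Vert \vec p\, u\Vert$ or $\Vert r^{-1}u\Vert$ (its domain is not contained in $\mathcal{H}^1$), so Hardy-type estimates do not make the remainder relatively bounded \emph{with respect to the operator you are perturbing}; moreover a non-spherically-symmetric remainder destroys the channel decomposition altogether. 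The paper avoids this by using symmetry-independent methods: Schmincke's intercalary-operator argument for essential self-adjointness and Xia's work for the deficiency indices without symmetry (and its Remark 2.2 notes that borderline values of $\nu$ genuinely require extra information on $V$). Second, in part (i) the real content for $\tfrac12\leq|\nu|<\tfrac{\sqrt3}{2}$ is precisely the identification $\mathcal{D}(\bar T)=\mathcal{H}^1$, which you postulate as ``a sharp Hardy--Dirac estimate ensuring the graph norm of $T$ controls the $\mathcal{H}^1$-norm.'' That estimate is the theorem, not a lemma one can quote: the paper devotes the Rejt\"o--Gustafsson generalization of Kato--Rellich (via Fredholm operators of index zero) and Schmincke's decomposition $T_0+V=(T_0+C)+(V-C)$ to exactly this point, because the naive Kato--Rellich bound stops at $|\nu|<\tfrac12$.

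The construction of $T_d$ in part (ii) is also off track. You propose to prove $T_d$ is semibounded and apply ``a Friedrichs-type construction,'' but no self-adjoint extension of a Dirac--Coulomb operator is semibounded --- the essential spectrum already fills $(-\infty,-1]\cup[1,\infty)$ --- so there is no Friedrichs extension to speak of; indeed the paper lists the absence of a KVB-type theory for non-semibounded operators as an \emph{open problem} (Section 5, item ii). The hypothesis that $\phi$ be bounded above or below plays a different role: it is the assumption under which Klaus and W\"ust prove that the extension with finite potential energy (constructed by Schmincke via a multiplicative intercalary operator, or by W\"ust via $g$-limits of cut-off potentials $T_0+V_t$) coincides with Nenciu's extension, which is the unique one with $\mathcal{D}(T_N)\subset\mathcal{D}(|\bar T_0|^{1/2})$ and is obtained by a Lax--Milgram-type form argument, not by selecting the $r^{+\gamma}$ boundary behaviour. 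Your idea of singling out the less singular solution $r^{+\gamma}$ is a reasonable heuristic in each radial channel, but by itself it neither yields the two inclusions in (\ref{eq:IntrEn}) nor their uniqueness, and it does not apply to non-symmetric $\phi$; the coincidence of the ``finite potential energy'' and ``finite kinetic energy'' characterizations is a theorem (Klaus--W\"ust) with the semi-boundedness of $\phi$ as an essential hypothesis, not a byproduct of closability of a form.
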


The reason why in the regime $\frac{\sqrt{3}}{2} < |\nu| < 1$ we call $T_d$ distinguished is that physically the condition (\ref{eq:IntrEn}) is a requirement for the functions in $\mathcal{D}(T_d)$ to have a finite kinetic and potential energy. It is also notice-worthy that $T_d$ is the unique self-adjoint extension with this property (see section \ref{subsec:3.2}).

Another important remark is on the treshold value $|\nu|=\frac{\sqrt{3}}{2}$. In fact, in this case, it is not possible to determine whereas $T$ is essentially self-adjoint or not without any further information on $V$ (see \cite{X99} for more details). In the special case of the pure Coulomb potential (\ref{eq:CoulombPot}) for the choice $|\nu|=\frac{\sqrt{3}}{2}$ the operator $T$ is essentially self-adjoint.

Due to these reasons, in the literature one usually refers to i) as the \emph{regular regime}, to ii) as the \emph{critical regime}, and to iii) as the \emph{supercritical regime}.

The first step in the study of self-adjoint extensions is the computation of the deficency indices of the Dirac-Coulomb operator: we discuss it in section \ref{sec:DefIndex}. In section \ref{sec:History} we place the study of the self-adjoint extensions of Dirac-Coulomb operators into a historical perspective from Kato's paper in 1951 up to recent works. This includes also the sketch of some key proofs, with no pretension of completeness. In section \ref{sec:Hogeve} we present what is known about the classification of self-adjoint extensions of the Dirac operator. Last, in section \ref{sec:Future} we present some questions that, to our opinion, are more relevant and deserve further investigations.

\section{Deficency indices}\label{sec:DefIndex}
In this section we compute the deficency indices for the Dirac-Coulomb operator. We recall that given a densely defined symmetric operator $T$ its deficency indices are
\begin{equation}
n_{\pm}:=\mathrm{dim} \ker(T^* \mp i).
\end{equation}
In a sense they measure 'how far' the operator $T$ is from being self-adjoint. More precisely, by a well-known result (see \cite{RS} Corollary to Theorem X.2), a densely defined symmetric operator admits non-trivial self-adjoint extensions if and only if the deficency indices are equal and different from zero: $n_+=n_-\neq 0$. If this is true and $n_+<\infty$, then all the self-adjoint extensions of $T$ are parametrized by $n_+^2$ real parameters. It is therefore very natural to begin this review on self-adjoint extensions of Dirac-Coulomb operator with the computation of the deficency indices. 

\begin{theorem}[\cite{Weid}, Theorem 6.9]
Let $T$ be the Dirac operator with Coulomb potential with $V(x)= \frac{\nu}{|x|}I_4$ defined on $\mathcal{C}^\infty_c$. Then the deficency indices are
\begin{itemize}
\item[i)] $(0,0)$ if $|\nu| \leq \frac{\sqrt{3}}{2}$;
\item[ii)] $(2n(n+1),2n(n+1))$ if $\sqrt{n^2-\frac{1}{4}} < |\nu| \leq \sqrt{(1+n)^2-\frac{1}{4}}$ with $n \in \mathbb{N}$.
\end{itemize}
\end{theorem}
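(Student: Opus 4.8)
The plan is to exploit the spherical symmetry of the pure Coulomb potential to reduce the problem to a family of radial operators on the half-line, and then to determine the deficiency contribution of each radial channel by a Weyl limit-point/limit-circle analysis at the singular endpoint $r=0$.

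First I would carry out the partial-wave decomposition. Since $T$ commutes with total angular momentum and with the spin--orbit operator, the Hilbert space $\mathcal{L}^2$ splits as an orthogonal direct sum of channels labelled by a nonzero integer $\kappa$ (the spin--orbit quantum number, with $|\kappa|=j+\tfrac12$) and by $m_j\in\{-j,\dots,j\}$. Writing a four-spinor in the channel $(\kappa,m_j)$ as $\tfrac1r\bigl(f(r)\,\Omega_{\kappa,m_j},\,\I g(r)\,\Omega_{-\kappa,m_j}\bigr)^{\!\top}$ turns the $\mathcal{L}^2$ norm into $\int_0^\infty(|f|^2+|g|^2)\,\D r$ and reduces $T$ to the radial Dirac operator
\begin{equation}
h_\kappa=\begin{pmatrix} 1+\dfrac{\nu}{r} & -\dfrac{\D}{\D r}+\dfrac{\kappa}{r}\\[2mm] \dfrac{\D}{\D r}+\dfrac{\kappa}{r} & -1+\dfrac{\nu}{r}\end{pmatrix}
\end{equation}
acting on $C_c^\infty((0,\infty);\mathbb{C}^2)$. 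The deficiency indices of $T$ are the sums over all channels of those of the $h_\kappa$, and since each $h_\kappa$ depends on $\kappa$ alone, the $2j+1=2|\kappa|$ values of $m_j$ contribute identical copies.

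Next I would compute the deficiency index of a single $h_\kappa$. The solution space of $(h_\kappa-\I)u=0$ is two-dimensional, so I only need to count its $\mathcal{L}^2((0,\infty))$ solutions. At $r=\infty$ the mass term dominates and one solution decays exponentially while the other grows, so the endpoint is limit point and contributes a one-dimensional space of solutions square-integrable at infinity. The decisive analysis is at $r=0$: there the terms $\nu/r$ and $\kappa/r$ dominate, and a Frobenius ansatz $u\sim r^{s}$ yields the indicial equation $s^2=\kappa^2-\nu^2$, hence $u\sim r^{\pm\gamma}$ with $\gamma=\sqrt{\kappa^2-\nu^2}$. Square-integrability near $0$ requires $\mathrm{Re}\,s>-\tfrac12$, so both solutions lie in $\mathcal{L}^2$ (limit circle) precisely when $\gamma<\tfrac12$ or $\gamma$ is imaginary, i.e.\ when $\nu^2>\kappa^2-\tfrac14$, while only one does (limit point) when $\nu^2\le\kappa^2-\tfrac14$. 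Intersecting with the one-dimensional space at infinity, the channel $\kappa$ contributes $1$ to $n_\pm$ in the limit-circle case and $0$ in the limit-point case; since $h_\kappa$ has real coefficients, complex conjugation forces $n_+=n_-$.

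Finally I would count and sum. The limit-circle condition $\nu^2>\kappa^2-\tfrac14$ reads $|\kappa|<\sqrt{\nu^2+\tfrac14}$. For $|\nu|\le\frac{\sqrt3}2$ even $|\kappa|=1$ fails it, so every channel is limit point and the deficiency indices are $(0,0)$. For $\sqrt{n^2-\tfrac14}<|\nu|\le\sqrt{(1+n)^2-\tfrac14}$ the inequality holds exactly for $|\kappa|=1,\dots,n$; for each magnitude $m=|\kappa|$ there are two signs of $\kappa$ and $2m$ values of $m_j$, i.e.\ $4m$ channels each contributing $1$, whence
\begin{equation}
n_+=n_-=\sum_{m=1}^{n}4m=2n(n+1),
\end{equation}
as claimed. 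The main obstacle is making the separation of variables and the endpoint analysis fully rigorous: one must justify that $C_c^\infty$ decomposes compatibly with the channel splitting and, above all, control the error terms (the mass and the spectral parameter) so that the exact $\mathcal{L}^2$ behaviour of solutions of $(h_\kappa-\I)u=0$ matches the leading Frobenius exponents $r^{\pm\gamma}$ -- this is where a careful asymptotic/perturbative argument, rather than the bare indicial equation, is needed.
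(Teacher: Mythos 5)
Your proposal is correct and follows essentially the same route as the paper: partial-wave decomposition into radial channels, Weyl limit-point/limit-circle analysis at the two endpoints (with limit point always holding at infinity), and the channel count $\sum_{m=1}^{n}4m=2n(n+1)$. The one point where the paper does something different is precisely the step you flag as the main obstacle: instead of an asymptotic Frobenius-plus-error analysis of $(h_\kappa-\I)u=0$, the paper notes that subtracting the bounded matrix $\beta$ changes neither the deficiency indices nor the limit-point/limit-circle classification, and that by Weyl's alternative it suffices to test square-integrability of solutions for a \emph{single} value of $\lambda$; choosing $\lambda=0$ for $t_{m_j,\kappa_j}-\beta$ makes the equation exactly solvable, with pure power solutions $u(r)=\bigl(\pm\sqrt{\kappa_j^2-\nu^2}-\kappa_j,\ \nu\bigr)^{\!\top} r^{\pm\sqrt{\kappa_j^2-\nu^2}}$, so the $L^2$ behaviour at both endpoints is read off directly and no perturbative control is needed. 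If you adopt this trick your argument closes completely; otherwise you still owe the justification that the true solutions at $\lambda=\I$ have leading behaviour $r^{\pm\gamma}$, including the resonant cases where the indicial roots differ by an integer and logarithmic corrections can appear -- note that the borderline $\gamma=\tfrac12$, i.e.\ $\nu^2=\kappa_j^2-\tfrac14$, is exactly where the inequality signs in the statement are decided.
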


\begin{remark}\label{remark:1}
The deficency indices for the Dirac operator with scalar potential are the same even if we relax the hypothesis of spherical symmetry of the potential. In fact, the statement of the theorem remains unchanged except for the fact that the inequalities become all strict. In order to compute the deficency indices for $\nu=\sqrt{n^2-\frac{1}{4}}$ in the general case of non spherical symmetry one needs additional information on the potential (see \cite{X99} Theorem 4.2).
\end{remark}

\begin{proof}
By passing to polar coordinates and denoting with $\D\Omega$ the surface measure of the  unit sphere we obtain an isomorphism
\begin{equation}
U: L^2(\mathbb{R}^3,\mathbb{C},\D^3x) \to L^2((0,\infty),\mathbb{C},\D r) \otimes L^2(\mathbb{S}^2,\mathbb{C},\D \Omega)
\end{equation}
by setting for each $\Psi \in L^2(\mathbb{R}^3,\mathbb{C},\D^3x)$
\begin{equation}
(U\Psi)(r,\vartheta,\varphi)=r \Psi (\vec x (r,\vartheta,\varphi)).
\end{equation}
The isomorphism can be extended to $\mathcal{L}^2$ component-wise and it will be denoted with the same symbol. 
Under this transformation the free Dirac operator takes the form
\begin{equation}
UT_0U^*= -\I (\boldsymbol\alpha \cdot \hat{\vec{r}})\left(\frac{\partial}{\partial r} - \frac{2}{r} \vec{S}\cdot \vec{L} \right) + \beta.
\end{equation}

Here $\vec L=\vec{x}\times\vec{p}$ denotes the angular momentum operator, $\vec S=-\frac{1}{4} \boldsymbol{\alpha} \times \boldsymbol{\alpha}$ the spin operator and $\hat{\vec{r}}$ is the radial versor.

A direct computation shows that the operator $\vec{S}\cdot \vec{L}$ commutes with the free Dirac operator $UT_0 U^*$. To proceed in the analysis it is convenient to introduce the operator $K=2 \beta(\vec{S}\cdot\vec{L}+1)$ in order to re-write the free Dirac operator as
\begin{equation}
UT_0U^*= -\I (\boldsymbol\alpha \cdot \hat{\vec{r}})\left(\frac{\partial}{\partial r} + \frac{1}{r}-\frac{1}{r}\beta K \right) + \beta.
\end{equation}

Denoting with $\vec J = \vec L + \vec S$ the total angular momentum operator, it is possible to show that the operator $K$ commutes with  $J^2$ and with the third component of the total angular momentum operator $J_3$. Moreover, it is possible to find a common basis of infinitely differentiable orthonormal eigenfunctions on $L^2(\mathbb{S}^2,\mathbb{C}^4,\D\Omega)$ and to prove that all these operators have pure point spectrum (see \cite{Weid} appendix to section 1).

The Hilbert space decomposes into the direct sum of $2$-dimensional spaces
\begin{equation}\label{eq:DecompositionHilbert}
L^2(\mathbb{S}^2,\mathbb{C}^4,\D \Omega) = \bigoplus_{j \in \mathbb{N}+\frac{1}{2}} \bigoplus_{m_j=-j}^j \bigoplus_{\kappa_j=\pm(j+\frac{1}{2})} \mathcal{K}_{m_j,\kappa_j},
\end{equation}
where $\mathcal{K}_{m_j,\kappa_j}=\mathrm{span}\{ \Phi_{m_j,\kappa_j}^+, \Phi_{m_j,\kappa_j}^-\}$ and $\Phi^\pm_{m_j,\kappa_j}$ are smooth common eigenfunctions of $J^2$, $K$, $J_3$ with eigenvalues $j(j+1)$, $\kappa_j$ and $m_j$ respectively. 

Now each vector $\psi \in U^*\mathcal{C}^\infty_c$ can be written as
\begin{equation}
\psi(r,\vartheta,\varphi)= \frac{1}{r}\sum_{j,m_j,\kappa_j} \left(f^+_{m_j,\kappa_j}(r) \Phi^+_{m_j,\kappa_j}(\vartheta, \varphi)+f^-_{m_j,\kappa_j}(r) \Phi^-_{m_j,\kappa_j}(\vartheta,\varphi)\right)
\end{equation}
with coefficient functions $f^\pm_{m_j,\kappa_j}(r) \in C^\infty_c((0,\infty))$. Hence, by putting together all the ingredients, we can compute the action of the radial Dirac-Coulomb operator on each reducing subspace $\mathcal{K}_{m_j,\kappa_j} \otimes C^\infty_c((0,\infty))$ as
\begin{equation}\label{eq:RadialDirac}
t_{m_j,\kappa_j}=\left(\begin{array}{cc}
1+\frac{\nu}{r} & -\frac{d}{dr}+\frac{\kappa_j}{r} \\
\frac{d}{dr}+\frac{\kappa_j}{r} & -1+\frac{\nu}{r} \\
\end{array}\right),
\end{equation}
and one is left with the computation of the deficency indices for this ordinary differential operator. 

Following Weidmann's argument we exploit a limit-point/limit-circle analysis.  The differential operator $t_{m_j,\kappa_j}$ is said to be in the \emph{limit point case} at $0$ (resp. at $\infty$) if for every $\lambda \in \mathbb{C}$ all solutions of $(t_{m_j,\kappa_j}-\lambda)u=0$ are square integrable in $(0,1)$ (resp. in $(1,\infty)$). The operator $t_{m_j,\kappa_j}$ is said to be in the \emph{limit circle case} at $0$ (resp. at $\infty$) if for every $\lambda \in \mathbb{C}$ there is at least one solution of $(t_{m_j,\kappa_j}-\lambda)u=0$ which is not square integrable in $(0,1)$ (resp. in $(1,\infty)$).

Once we know if $t_{m_j,\kappa_j}$ is in the limit circle case or in the limit point case the following general theorem gives us the deficency indices. 

\begin{theorem}[\cite{Weid}, Theorem 5.7] \label{th:Weid57}
The deficency indices of $t_{m_j,\kappa_j}$ are
\begin{itemize}
\item[i)] $(2,2)$ if $t_{m_j,\kappa_j}$ is in limit circle case at both $0$ and $\infty$;
\item[ii)] $(1,1)$ if $t_{m_j,\kappa_j}$ is in limit circle case at one end point and in limit point case at the other;
\item[iii)] $(0,0)$ if $t_{m_j,\kappa_j}$ is in limit point case at both $0$ and $\infty$.
\end{itemize}
\end{theorem}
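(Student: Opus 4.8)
The plan is to reduce the computation of the deficiency indices to a dimension count for the space of square-integrable solutions of the eigenvalue equation, and then to read off that dimension from the limit-point/limit-circle behaviour at the two endpoints. Write $t=t_{m_j,\kappa_j}$ for the formal expression in \eqref{eq:RadialDirac}, let $T_{\min}$ be its closure on $C^\infty_c((0,\infty);\mathbb{C}^2)$ and $T_{\max}=T_{\min}^*$ the maximal operator. First I would recall the standard identification $\ker(T_{\max}\mp\I)=\{u\in L^2((0,\infty),\mathbb{C}^2):(t\mp\I)u=0\}$, valid because the coefficients of $t$ are smooth on the open interval, so that every $L^2$ distributional solution is automatically a classical solution lying in $\mathcal{D}(T_{\max})$. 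Since $t$ is a first-order $2\times2$ system, the Cauchy problem for $(t-\lambda)u=0$ has, for each fixed $\lambda$, a two-dimensional solution space $V_\lambda$ on all of $(0,\infty)$. Hence $n_\pm=\dim\ker(T_{\max}\mp\I)$ is simply the number of those two solutions that are globally square-integrable.

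Next I would localise at the endpoints. Fixing a regular interior point $c\in(0,\infty)$, let $V_0\subset V_\lambda$ be the subspace of solutions square-integrable on $(0,c)$ and $V_\infty\subset V_\lambda$ those square-integrable on $(c,\infty)$; a global $L^2$ solution is exactly an element of $V_0\cap V_\infty$, so $n_\pm=\dim(V_0\cap V_\infty)$. The content of Weyl's alternative, which I would import as the main analytic input, is that $\dim V_0$ and $\dim V_\infty$ are constant as $\lambda$ ranges over each open half-plane, that this common value is $2$ in the limit-circle case (all solutions square-integrable) and $1$ in the limit-point case (always at least one square-integrable Weyl solution when $\operatorname{Im}\lambda\neq0$), and, crucially, that the limit-point case at an endpoint is equivalent to the vanishing at that endpoint of the boundary sesquilinear form
\begin{equation}
[u,v](x)=\overline{u_1(x)}\,v_2(x)-\overline{u_2(x)}\,v_1(x)
\end{equation}
appearing in the Lagrange identity $\langle tu,v\rangle-\langle u,tv\rangle=[u,v]\big|_0^\infty$. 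Because the coefficients of $t$ are real, complex conjugation maps $V_\lambda$ onto $V_{\bar\lambda}$ preserving square-integrability, so $\dim V_0$ and $\dim V_\infty$ take the same value at $\lambda=\I$ and $\lambda=-\I$; this already gives $n_+=n_-$.

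With this in hand the first two cases are pure linear algebra in the two-dimensional space $V_\lambda$. If $t$ is limit-circle at both endpoints then $V_0=V_\infty=V_\lambda$, so $V_0\cap V_\infty=V_\lambda$ and $n_\pm=2$. If it is limit-circle at one endpoint and limit-point at the other, then one of $V_0,V_\infty$ is all of $V_\lambda$ while the other is one-dimensional, whence $\dim(V_0\cap V_\infty)=1$ and $n_\pm=1$. In both cases no information beyond the dimension count is needed.

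The hard part will be the limit-point/limit-point case iii), where $\dim V_0=\dim V_\infty=1$ and the abstract inequality only yields $\dim(V_0\cap V_\infty)\le1$: one must show the intersection is trivial. Here I would argue via essential self-adjointness rather than by comparing the two one-dimensional subspaces directly. When $t$ is limit-point at both $0$ and $\infty$, the form $[u,v]$ vanishes at both endpoints for all $u,v\in\mathcal{D}(T_{\max})$, so the Lagrange identity gives $\langle T_{\max}u,v\rangle=\langle u,T_{\max}v\rangle$ for all such $u,v$; thus $T_{\max}$ is symmetric and, being also the adjoint of $T_{\min}$, coincides with $T_{\min}$, which is therefore self-adjoint. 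A self-adjoint operator has no eigenvector for the non-real values $\pm\I$, so $\ker(T_{\max}\mp\I)=V_0\cap V_\infty=\{0\}$ and $n_\pm=0$. The genuine obstacle is thus the justification of the two structural facts quoted from Weyl's theory, namely the constancy of the $L^2$-solution count in each half-plane and the equivalence of the limit-point case with the vanishing of the boundary form; both rest on the nested-disk analysis of the Weyl--Titchmarsh function for the system $t$, and everything else is bookkeeping in a two-dimensional solution space.
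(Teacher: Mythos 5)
Your proposal is correct, but note that the paper itself offers no proof of this statement at all: Theorem \ref{th:Weid57} is quoted verbatim from Weidmann's book and used as a black box inside the proof of the deficiency-index computation, exactly as Weyl's alternative (Weidmann, Theorem 5.6) is. So what you have written is a reconstruction of the standard textbook argument rather than an alternative to anything in the paper. Your reconstruction is sound: the identification $n_\pm=\dim(V_0\cap V_\infty)$, the conjugation argument giving $n_+=n_-$ (the coefficients of $t_{m_j,\kappa_j}$ are indeed real), the dimension counts in cases i) and ii), and the boundary form $[u,v](x)=\overline{u_1(x)}v_2(x)-\overline{u_2(x)}v_1(x)$, which is exactly the bracket produced by integrating by parts in the system (\ref{eq:RadialDirac}). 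You also correctly identified the only delicate point: in the limit-point/limit-point case the count $\dim V_0=\dim V_\infty=1$ alone cannot exclude that the two lines coincide, and your resolution (limit-point at an endpoint forces $[u,v]\to 0$ there for all $u,v\in\mathcal{D}(T_{\max})$, hence $T_{\max}$ is symmetric, hence $T_{\min}=T_{\max}$ is self-adjoint and has no non-real eigenvalues) is a clean and correct way to finish.

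One caveat deserves to be made explicit. The fact you import for case iii) --- that the limit-point case is equivalent to the vanishing of the boundary form on the maximal domain --- is, in Weidmann's own development, established essentially together with (and partly by means of) the deficiency-index theorem you are trying to prove; in other treatments it is derived from the von Neumann decomposition $\mathcal{D}(T_{\max})=\mathcal{D}(T_{\min})\dotplus\ker(T_{\max}-\I)\dotplus\ker(T_{\max}+\I)$, which again presupposes knowledge of the kernels. To keep your argument non-circular you must use a proof of that characterisation that stands on its own (the nested-disk construction you allude to does this, as does the alternative ``decomposition method'': splitting $(0,\infty)$ at an interior regular point $c$ and using $n_\pm(0,\infty)=n_\pm(0,c)+n_\pm(c,\infty)-2$ for a $2\times 2$ first-order system). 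As long as that is spelled out, your proof is complete; as written, it correctly isolates the analytic core but leaves this dependency structure implicit.
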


By Weyl's alternative theorem (see \cite{Weid} Theorem 5.6), either for every $\lambda \in \mathbb{C}$ all solutions of $(t_{m_j,\kappa_j}-\lambda)u=0$ are square integrable in $(0,1)$ (resp. in $(1,\infty)$), or for every $\lambda \in \mathbb{C}\setminus\mathbb{R}$ there exists a unique (up to a multiplicative constant) solution $u$ of $(t_{m_j,\kappa_j}-\lambda)u=0$ which is square integrable in $(0,1)$ (resp. in $(1,\infty)$). Therefore, since no third option is possible, it is sufficient to check whether both the solutions of $t_{m_j,\kappa_j}u=0$ are square integrable in $(0,1)$ and $(1,\infty)$.

To check if $t_{m_j,\kappa_j}$ is in the limit-point or in the limit-circle case we consider the operator $(t_{m_j,\kappa_j}-\beta)$.  The subtraction of a bounded operator does not change the computation of the deficency indices. Choosing $\lambda=0$, the equation to be solved is $(t_{m_j,\kappa_j}-\beta) u=0$. Its solutions are
\begin{equation}
u(r)= \left(\begin{array}{c} u^+(r) \\ u^-(r) \end{array}\right)=\left(\begin{array}{c} \pm \sqrt{\kappa_j^2-\nu^2}-\kappa_j \\
\nu \end{array}\right) r^{\pm \sqrt{\kappa_j^2-\nu^2}}.
\end{equation}
%Questi \pm fanno un po' di confusione

From this explicit expression we see that the solution with positive exponent cannot be square integrable in $(1,\infty)$ and hence independently of the parameters $\kappa_j$ and $\nu$ the operator is always in the limit point case at infinity.

The solution with positive exponent is always square integrable in $(0,1)$ while the one with negative square root is square integrable near zero if and only if
\begin{equation}
-2\sqrt{\kappa_j^2-\nu^2} \leq -1,
\end{equation}
which means
\begin{equation}\label{eq:LPCAtOrigin}
\nu^2 \leq \kappa_j^2-\frac{1}{4}.
\end{equation}
Then if $\nu$ satisfies (\ref{eq:LPCAtOrigin}), the operator is in the limit point case at both endpoints. By Theorem \ref{th:Weid57}, if (\ref{eq:LPCAtOrigin}) holds the deficency indices of the operator are $(0,0)$, otherwise the deficency indices of the operator are $(1,1)$.

To compute the deficency indices of the full operator we have to count how many reduced operators are not essentially self-adjoint. Explicitly,
\begin{equation}
n_{\pm}=\sum_{j \in \mathbb{N}+\frac{1}{2}} \sum_{m_j=-j}^j \sum_{\kappa_j=\pm(j+\frac{1}{2})} \left\{\begin{array}{l}
1 \qquad \mathrm{if} \quad \nu^2 > \kappa_j-\frac{1}{4} \\
0 \qquad \mathrm{else} 
\end{array}\right. .
\end{equation}
Let $n$ be the integer such that $n^2 -\frac{1}{4} < \nu^2 \leq (n+1)^2-\frac{1}{4}$. We obtain
\begin{equation}
n_{\pm}=\sum_{j \in \mathbb{N}+\frac{1}{2}}^{n-\frac{1}{2}} \sum_{m_j=-j}^j 2 = 2n(n+1).
\end{equation}
\end{proof}

\section{Potential with Coulomb-like singularity}\label{sec:History}
In this section we review the historical path that, together with the results in the previous section, led to the present understanding on the existence and uniqueness of self-adjoint extensions of the minimal Dirac-Coulomb operator. 

Conceptually and historically the two main questions addressed so far, and that we are going to analyse are:
\begin{enumerate}
\item Is the operator $T_0+V$ essentially self-adjoint?
\item If it is not, is there a \emph{special} self-adjoint extension which is physically relevant?
\end{enumerate}

The technique employed in answering the first question is essentially a perturbative argument based on the Kato-Rellich theorem and it is addressed in the first subsection.

The second question presents a wider range of answers and many authors provided different meaningful special extensions. Only at a later stage they recognized that, under some hypothesis, they were referring to the same operator. This subject is addressed in the second subsection.

\subsection{Essential self-adjointness via Kato-Rellich theorem}
One of the first proofs of the essential self-adjointness for the Dirac-Coulomb operator is due to Kato in 1951 as a direct application of the Kato-Rellich theorem. Despite the simplicity of the proof, this does not cover the whole range of the parameter $\nu$ on which the Dirac-Coulomb operator is essentially self-adjoint.

Some years later two different approaches based on the same theorem were developed in order to cover the range $[0,\frac{\sqrt 3}{2}]$: the first one, due to Rejt\"o and Gustafsson \cite{R70,GR73} aimed to weaken its hypotheses, the other one due to Schminke \cite{S72} uses the original theorem. Instead of looking to $V$ as a perturbation of $T_0$ he introduced an operator $C$ and considered $T_0+V=(T_0+C)+(V-C)$. To prove the essential self-adjointness of $T_0+V$ he proved separately the essential self-adjointness of $T_0+C$ and looked at $V-C$ as a perturbation satisfying the hypothesis of Kato-Rellich.

Several other works dealt with the same problem, among which we mention \cite{R53,RS62,E70,W71,C77,LR81}. For a self-contained conceptual review we  present in detail only the above-mentioned ones of Rejt\"o-Gustaffson and Schmincke.

Since it will play a central role in this subsection, we recall the classical statement of the Kato-Rellich theorem.

%Riformulare con A essenzialmente autoaggiunto
\begin{theorem}[Kato-Rellich]
Suppose that $A$ is an essentially self-adjoint operator, $B$ is a symmetric operator that is $A$-bounded with relative bound $a<1$, namely 
\begin{itemize}
\item[i)] $\mathcal{D}(B) \supset \mathcal{D}(A)$;
\item[ii)] For some $a<1$, $b \in \mathbb{R}$ and for all $\varphi \in \mathcal{D}(A)$,
\begin{equation}
\Vert B\varphi \Vert \leq a \Vert A \varphi \Vert+b\Vert \varphi \Vert.
\end{equation}
\end{itemize}
Then $\overline{A+B}$ is self-adjoint on $\mathcal{D}(\bar{A})$ and essentially self-adjoint on any core of $A$.
\end{theorem}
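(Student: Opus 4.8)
The plan is to reduce the statement to the case in which $A$ is already self-adjoint and then to apply the basic self-adjointness criterion phrased in terms of ranges. First I would observe that the relative bound (ii) extends by continuity from $\mathcal{D}(A)$ to $\mathcal{D}(\bar A)$: since $\mathcal{D}(A)$ is a core for $\bar A$ and $B$, being symmetric, is closable, any $\varphi \in \mathcal{D}(\bar A)$ may be approximated by $\varphi_n \in \mathcal{D}(A)$ with $A\varphi_n \to \bar A\varphi$, and the estimate forces $B\varphi_n$ to be Cauchy. Hence $\bar B$ is $\bar A$-bounded with the same relative bound $a<1$, and it suffices to prove the theorem assuming $A=\bar A$ is self-adjoint on $\mathcal{D}(A)$, establishing that $A+B$ is self-adjoint on $\mathcal{D}(A)$. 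The assertion about cores then follows because the relative boundedness makes the graph norm of $A+B$ equivalent to that of $A$, so every core for $A$ is a core for $A+B$.

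For the self-adjoint case I would first note that $A+B$ is symmetric on $\mathcal{D}(A)$ and then invoke the criterion that a closed symmetric operator $S$ is self-adjoint if and only if $\mathrm{ran}(S \pm \I \mu) = \mathcal{H}$ for some $\mu>0$. The decisive tool is the algebraic factorisation
\begin{equation}
A + B \pm \I \mu = \big(I + B(A \pm \I \mu)^{-1}\big)(A \pm \I \mu),
\end{equation}
which makes sense because $A$ self-adjoint guarantees that $(A \pm \I \mu)^{-1}$ is a bounded operator onto $\mathcal{D}(A)$ for $\mu>0$. Since $A \pm \I \mu$ maps $\mathcal{D}(A)$ bijectively onto $\mathcal{H}$, the surjectivity of $A+B \pm \I \mu$ will follow as soon as the first factor $I + B(A \pm \I \mu)^{-1}$ is boundedly invertible, which by a Neumann series argument holds provided $\Vert B(A \pm \I \mu)^{-1}\Vert < 1$.

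The heart of the argument is therefore the resolvent estimate. Writing $\psi = (A \pm \I \mu)^{-1}\varphi$, so that $\varphi = (A \pm \I \mu)\psi$, I would use that self-adjointness of $A$ yields the orthogonality relation $\Vert \varphi\Vert^2 = \Vert A\psi\Vert^2 + \mu^2\Vert\psi\Vert^2$ (the cross terms cancel because $\langle A\psi,\psi\rangle$ is real), whence $\Vert A\psi\Vert \le \Vert\varphi\Vert$ and $\Vert\psi\Vert \le \mu^{-1}\Vert\varphi\Vert$. Feeding these into (ii) gives
\begin{equation}
\Vert B\psi\Vert \le a\Vert A\psi\Vert + b\Vert\psi\Vert \le \Big(a + \frac{b}{\mu}\Big)\Vert\varphi\Vert,
\end{equation}
so that $\Vert B(A \pm \I\mu)^{-1}\Vert \le a + b/\mu$. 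Because $a<1$ by hypothesis, choosing $\mu$ sufficiently large makes this quantity strictly below $1$, which closes the loop and yields $\mathrm{ran}(A+B \pm \I\mu) = \mathcal{H}$.

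I expect the subtle point to be the reduction step rather than the quantitative estimate: one must check carefully that the relative bound genuinely transfers to the closure and that $B$ is well defined on all of $\mathcal{D}(\bar A)$, since the hypotheses only assert $\mathcal{D}(B) \supset \mathcal{D}(A)$. Once $A$ is taken self-adjoint the factorisation and the spectral estimate are essentially forced, the only genuinely quantitative input being the strict inequality $a<1$, which is precisely what guarantees convergence of the Neumann series.
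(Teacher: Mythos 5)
Your proof is correct, and it is the standard direct argument (essentially Reed--Simon, Theorem X.12): reduce to $A$ self-adjoint, factor $A+B\pm\I\mu=\bigl(I+B(A\pm\I\mu)^{-1}\bigr)(A\pm\I\mu)$, prove $\Vert B(A\pm\I\mu)^{-1}\Vert\le a+b/\mu<1$ for large $\mu$ via the identity $\Vert(A\pm\I\mu)\psi\Vert^2=\Vert A\psi\Vert^2+\mu^2\Vert\psi\Vert^2$, and conclude surjectivity of $A+B\pm\I\mu$ by a Neumann series; the reduction step (extending the relative bound to $\mathcal{D}(\bar A)$ by closability of $B$, and the equivalence of the graph norms of $A$ and $A+B$) is handled correctly and is indeed the point that needs care, since the hypothesis only gives $\mathcal{D}(B)\supset\mathcal{D}(A)$.

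Note, however, that the paper itself never proves this classical statement: it recalls it as a known result and only later, in discussing Gustafson and Rejt\"o, gives a \emph{generalized} Kato--Rellich theorem whose proof goes through Fredholm theory (the operator $A_\mu=I-\bar V(\mu-\bar T_0)^{-1}$ must be Fredholm of index zero), with a remark that the classical theorem is recovered because $\Vert\bar V(\mu_\pm-\bar T_0)^{-1}\Vert<1$ makes $A_{\mu_\pm}$ invertible, hence Fredholm of index zero. So your quantitative input is exactly the same resolvent estimate that appears in that remark, but your route is more elementary and self-contained: you use only the basic range criterion for self-adjointness and a Neumann series, whereas the paper's framework invokes the index machinery. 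What the Fredholm route buys in exchange is generality: invertibility can be relaxed to ``invertible plus compact'' (the paper's Corollary 3.9-type statement), which is what allows Gustafson and Rejt\"o to push the essential self-adjointness of the Dirac--Coulomb operator up to $|\nu|<\sqrt{3}/2$, beyond the reach of the purely perturbative bound $a<1$.
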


Let us start with surveying Kato's proof from \cite{K55,K}. The starting point is the well-known Hardy inequality (see \cite{RS} section X.2 p.169)
\begin{equation}
\Vert \vec p u\Vert^2 \geq \frac{1}{4}\Vert r^{-1} u \Vert^2, \qquad \forall u \in C^\infty_c(\mathbb{R}^3).
\end{equation}
By using the properties of the $\boldsymbol\alpha$ matrices we get the identity
\begin{equation}
\Vert T_0 u \Vert^2 = \Vert \vec p u \Vert^2 + \langle (\beta \boldsymbol\alpha \cdot \vec p + \boldsymbol\alpha \cdot \vec p \beta)u,u\rangle +\Vert u \Vert^2 = \Vert \vec p u \Vert^2 + \Vert u \Vert^2.
\end{equation}
Thus, we see that if the potential is $|\phi(\vec x)| \leq \frac{\nu}{|\vec x|}$, we get the following chain of inequalities
\begin{equation}
\Vert \vec p u \Vert^2 \geq \frac{1}{4} \Vert r^{-1} u \Vert^2 \geq \frac{1}{4 \nu^2} \Vert \phi(\vec x) u \Vert^2,
\end{equation}
from which it follows that
\begin{equation}
\Vert \phi(\vec x) u \Vert \leq 4\nu^2 \Vert T_0 u \Vert^2 -4 \nu^2 \Vert u \Vert^2.
\end{equation}
If $\nu < \frac{1}{2}$, the hypotheses of the Kato-Rellich theorem are satisfied and one deduces that $T_0+V$ is essentially self-adjoint and the domain of the unique self adjoint extension is
\begin{equation}
\mathcal{D}(\overline{T_0+V})=\mathcal{H}^1.
\end{equation}

\begin{remark}
By using W\"ust theorem (see \cite{RS} theorem X.14) one can cover the case $\nu=\frac{1}{2}$. However the information on the domain of the self-adjoint extension is lost.
\end{remark}

\begin{remark}
The result is independent of the possible spherical symmetry and of precise matricial form of the potential: the conclusion holds if $\lim_{x \to 0} |x||V_{ij}(x)| < \frac{1}{2}$, where $i,j=1,2$ and $V_{ij}$ are the entries of the matrix $V$.
\end{remark}

\begin{remark}
Arai \cite{A75,A83} showed that by considering more general matrix-valued potentials of the form
\begin{equation}
V(x)=\frac{Z}{r}I_4+\frac{\I}{r} \boldsymbol\alpha \cdot \hat{\vec{r}} \beta b_1 + \frac{\beta}{r} b_2
\end{equation}
the necessary and sufficient condition for the essential self-adjointness is $(\kappa_j+b_1)^2+b_2^2 \geq Z^2 +\frac{1}{4}$ and hence the threshold $\frac{1}{2}$ is optimal, in the sense that if $V$ is in the form above and one of the entry of the matrix satisfies $|x| |V_{ij}|>\frac{1}{2}$ then it is possible to choose $Z,b_1,b_2$ such that the operator is not essentially self-adjoint. 
\end{remark}

In a work from 1970, Rejt\"o \cite{R70} discussed the particular case of spherically symmetric Coulomb-like potentials. By denoting with $B(\mathcal{L}^2)$ the set of bounded operators on $\mathcal{L}^2$, the requirement on $V$ for the operator $T_0+V$ to be essentially self-adjoint on $\mathcal{C}^\infty_c$ boils down to asking that $\exists \mu_{\pm}$ in the upper/lower closed complex half plane such that
\begin{equation}
(1-\bar V (\mu_{\pm}-\bar A_0)^{-1}) \in B(\mathcal{L}^2).
\end{equation}
Proving that the Dirac operator with Coulomb interaction satisfies this hypothesis for $\nu \in [0,\frac34)$, he was able to show that under this condition such an operator is essentially self-adjoint and the domain of its self-adjoint extension is $\mathcal{H}^1$.

In fact \cite{R70} provides some sort of intermediate results that led to the more relevant work \cite{GR73} by Gustaffson and Rejt\"o.  In this relevant continuation they generalized further Kato-Rellich theorem and they were able to achieve the essential self-adjointness for the Dirac operator in the regime $\nu \in [0,\sqrt{3}/2)$. 

Their generalization relies on Fredholm's theory, that we briefly recall here for the self-consistency of the presentation. A densely defined operator $A$ in a Banach space $\mathcal{X}$ is said to be Fredholm if $A$ is closed, $\mathrm{ran} \, A$ is closed, and both $\dim \ker A$ and $\dim \mathcal{X}/\mathrm{ran} \, T$ are finite. The index of a Fredholm operator $A$ is the number $i(T)=\dim \ker A - \dim \mathcal{X}/\mathrm{ran} \, A$.

\begin{theorem}[\cite{GR73}, Theorem 3.1, Generalized Kato-Rellich theorem]
Let $T_0$ be essentially self-adjoint, $V$ symmetric with $\mathcal{D}(V) \supset \mathcal{D}(T_0)$ where $V$ is $T_0$-bounded. For each $\mu$ in the resolvent set of $T_0$ define the operator $A_\mu \in B(\mathcal{L}^2)$ by
\begin{equation}
A_\mu := I-\bar V (\mu - \bar T_0)^{-1}.
\end{equation}
Then the three conditions below
\begin{itemize}
\item[i)] $T_0+V$ is essentially self-adjoint;
\item[ii)] $\overline{T_0+V}=\bar T_0+\bar V$;
\item[iii)] $\mathcal{D}(\overline{T_0+V})=\mathcal{D}(\bar T_0)$;
\end{itemize}
hold if and only if there exists $\mu_+$ in the closed upper half plane and $\mu_-$ in the closed lower half plane such that the operators $A_{\mu_{\pm}}$ are Fredholm of index zero.
\end{theorem}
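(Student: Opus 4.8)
The plan is to build everything on a single operator identity and then read all three conclusions off the spectral behaviour of $A_\mu$. For $\mu\in\rho(T_0)$ the resolvent $(\mu-\bar T_0)^{-1}$ is bounded, and since $V$ is $T_0$-bounded the product $\bar V(\mu-\bar T_0)^{-1}$ is bounded as well, so $A_\mu=I-\bar V(\mu-\bar T_0)^{-1}\in B(\mathcal{L}^2)$ exactly as claimed. Expanding this definition on $\mathcal{D}(\bar T_0)\subseteq\mathcal{D}(\bar V)$ and using $(\mu-\bar T_0)^{-1}(\mu-\bar T_0)=I$ there, the first thing I would record is the factorisation
\begin{equation}\label{eq:factor}
\mu-(\bar T_0+\bar V)=A_\mu\,(\mu-\bar T_0)\qquad\text{on }\mathcal{D}(\bar T_0).
\end{equation}
This identity is the non-perturbative substitute for the Neumann series that drives the classical Kato-Rellich proof.

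Next I would transfer the functional-analytic content of $A_\mu$ to the operator $R:=\bar T_0+\bar V$ on $\mathcal{D}(\bar T_0)$, which is readily checked to be symmetric. Because $\mu\in\rho(T_0)$, the map $\mu-\bar T_0\colon\mathcal{D}(\bar T_0)\to\mathcal{L}^2$ is a topological isomorphism for the graph norm, with bounded inverse. Reading (\ref{eq:factor}) as ``$\mu-R$ equals $A_\mu$ composed on the right with an invertible operator'' yields immediately that $\ker(\mu-R)=(\mu-\bar T_0)^{-1}\ker A_\mu$, that $\ran(\mu-R)=\ran A_\mu$ (so the two ranges are closed together), and that $\mu-R$ is Fredholm of a given index if and only if $A_\mu$ is, with the same index. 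The hypothesis thus becomes: there exist $\mu_\pm$ in the closed upper/lower half-planes with $\mu_\pm-R$ Fredholm of index $0$.

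To obtain (i)--(iii) I would first reduce to non-real $\mu_\pm$: the map $\mu\mapsto A_\mu$ is analytic on $\rho(T_0)$, and the index-zero Fredholm operators form an open set on which the index is locally constant, so if a real boundary point qualifies then so does a nearby point of the open half-plane. Assume then $\mathrm{Im}\,\mu_+>0$. Symmetry of $R$ gives $\Vert(\mu_+-R)\psi\Vert\geq|\mathrm{Im}\,\mu_+|\,\Vert\psi\Vert$, hence $\ker(\mu_+-R)=0$; together with index $0$ and closed range this forces $\mu_+-R$ to be boundedly invertible, and likewise $\mu_--R$. With a point of the resolvent set in each open half-plane, $R$ is self-adjoint, in particular closed. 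Finally I would identify $R$ with $\overline{T_0+V}$: the always-available one-sided bound $\Vert R\psi\Vert\leq(1+a)\Vert\bar T_0\psi\Vert+b\Vert\psi\Vert$ shows the $\bar T_0$-graph topology dominates the $R$-graph topology, so the core $\mathcal{D}(T_0)$ of $\bar T_0$ remains a core for the closed operator $R$; therefore $\overline{T_0+V}=R=\bar T_0+\bar V$ with domain $\mathcal{D}(\bar T_0)$, which is precisely (i), (ii), (iii). The converse is short: if (i)--(iii) hold then $\overline{T_0+V}=\bar T_0+\bar V=R$ is self-adjoint, so for any non-real $\mu$ the operator $\mu-R$ is boundedly invertible, hence Fredholm of index $0$, and the transfer above gives the same for $A_\mu$; one such $\mu$ in each open half-plane furnishes the required $\mu_\pm$.

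The step I expect to be the genuine obstacle is extracting (ii)--(iii) --- that the closure is exactly $\bar T_0+\bar V$ on the unperturbed domain --- without the restriction $a<1$ on the relative bound. When $a\geq1$ the sum $\bar T_0+\bar V$ need not be closed a priori and the two graph norms need not be equivalent, so the usual Kato-Rellich bookkeeping breaks down; it is the Fredholm-index hypothesis that restores closedness and surjectivity through the bounded-below estimate at non-real $\mu$, while the one-sided graph-norm domination is still enough for the core argument. A secondary delicate point is admitting $\mu_\pm$ on the real axis, where $\ker(\mu_\pm-R)$ may be non-trivial; this is exactly why one needs the openness and local constancy of the Fredholm index, rather than the mere constancy of deficiency indices across a half-plane, to migrate the hypothesis into the open half-plane.
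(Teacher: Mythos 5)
Your proof is correct and follows essentially the same route as the paper's (sketched) proof: the factorisation $\mu-\bar T_0-\bar V=\bigl[I-\bar V(\mu-\bar T_0)^{-1}\bigr](\mu-\bar T_0)$ together with the additivity of the Fredholm index under composition, then the standard deficiency-index/resolvent criterion to convert index-zero at a point in each half-plane into self-adjointness, and the reverse reading of the same identity for necessity. You additionally fill in details the paper's sketch leaves implicit --- notably the reduction of real boundary points $\mu_\pm$ to the open half-planes via openness and local constancy of the Fredholm index, and the graph-norm core argument identifying $\overline{T_0+V}$ with $\bar T_0+\bar V$ on $\mathcal{D}(\bar T_0)$ --- so the write-up is sound.
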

\begin{proof} (Sketch)
We start from the identity
\begin{equation}
\mu - \bar T_0 - \bar V = [I - \bar V(\mu-\bar T_0)^{-1}](\mu-\bar T_0).
\end{equation}
Since $\mu \in \rho(T_0)$, $\mu-\bar T_0$ is Fredholm of index zero and since the composition of Fredholm operators is Fredholm and the index of the composition is the sum of the indices, by using a standard criterion of essential self-adjointness, we prove the sufficient condition.

The necessity follows using the same index-formula and the fact that if $A_1A_2$ is Fredholm with $A_2$ Fredholm and $A_1$ closed, then $A_1$ is Fredholm and therefore by the above formula $A_{\pm i}$ is Fredholm of index 0.
\end{proof}

\begin{remark}
This theorem includes the classical Kato-Rellich noting that with $\mu_{\pm}=\pm i\frac{a}{b}$ one has $\Vert \bar V(\mu_{\pm}-\bar T_0)^{-1} \Vert < 1$. Hence $A_{\mu_{\pm}}$ are invertible and therefore Fredholm of index zero.
\end{remark}

The proof of the essential self-adjointness of the Dirac operator with Coulomb potential uses the following corollary:
\begin{corollary}\label{cor:GR73}
If there exist $\mu_+$ and $\mu_-$ as in the previous theorem such that $A_{\mu_{\pm}}=B_{\pm}+C_{\pm}$ where $B_{\pm}^{-1} \in B(\mathcal{L}^2)$ and $C_{\pm}$ are compact, then $T_0+V$ is essentially self-adjoint and $\mathcal{D}(\overline{T_0+V})=\mathcal{D}(\bar T_0)$.
\end{corollary}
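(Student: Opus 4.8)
The plan is to reduce the corollary directly to the Generalized Kato--Rellich theorem stated above, by checking that its sole hypothesis---that each $A_{\mu_\pm}$ be Fredholm of index zero---is an immediate consequence of the decomposition $A_{\mu_\pm}=B_\pm+C_\pm$. Once this is verified, conditions i) and iii) of that theorem yield precisely the essential self-adjointness of $T_0+V$ and the domain identity $\mathcal{D}(\overline{T_0+V})=\mathcal{D}(\bar T_0)$.

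First I would record the elementary fact that a boundedly invertible operator is Fredholm of index zero. Indeed, if $B_\pm^{-1}\in B(\mathcal{L}^2)$, then $B_\pm$ is a bijection of $\mathcal{L}^2$ onto itself, so that $\ker B_\pm=\{0\}$ and $\mathrm{ran}\,B_\pm=\mathcal{L}^2$ is closed with trivial quotient $\mathcal{L}^2/\mathrm{ran}\,B_\pm$. Hence $B_\pm$ is Fredholm with $i(B_\pm)=0-0=0$.

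The key step is then to invoke the stability of the Fredholm property, and of the Fredholm index, under compact perturbations: if $B$ is Fredholm and $C$ is compact, then $B+C$ is again Fredholm and $i(B+C)=i(B)$. This is a classical result of functional analysis (see e.g.\ \cite{RS}). Applying it with $B=B_\pm$, which is Fredholm of index zero by the previous step, and $C=C_\pm$, which is compact by hypothesis, I conclude that $A_{\mu_\pm}=B_\pm+C_\pm$ is Fredholm of index zero.

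It remains only to feed this into the Generalized Kato--Rellich theorem. Since the $\mu_\pm$ are chosen ``as in the previous theorem,'' they already lie, respectively, in the closed upper and lower half-planes, and we have just shown the associated operators $A_{\mu_\pm}$ to be Fredholm of index zero; the theorem therefore applies verbatim and delivers both desired conclusions. I do not expect any genuine obstacle here: the whole content of the argument is the single appeal to the perturbation-stability of the index, while the remaining steps are routine bookkeeping.
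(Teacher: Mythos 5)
Your proposal is correct and follows exactly the paper's own argument: the paper's proof is the one-line observation that an invertible operator is Fredholm of index zero and that this property is stable under compact perturbations, after which the Generalized Kato--Rellich theorem applies. You have simply spelled out these same two facts and the final appeal to the theorem in more detail.
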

\begin{proof}
This corollary follows from the fact that an invertible operator is Fredholm of index zero and that this property is stable under compact perturbations.
\end{proof}

By using the spherical symmetry and the decomposition of the Dirac operator Rejt\"o and Gustaffson prove that for $|\nu| \in [0,\frac{\sqrt{3}}{2})$ the hypothesis of Corollary \ref{cor:GR73} are satisfied and hence the spherically symmetric Dirac-Coulomb operator is essentially self-adjoint for that range of parameters. 

%To check the $T_0$-boundedness of $V$ one uses the fact that $V$ is dominated by the Coulomb potential with... and some Sobolev estimates. To check the other hypothesis the use an explicit form of the integral kernel of... and (WRITE IT!)

In this respect the work of Schmincke \cite{S72} is of interest in that the same conclusion on essential self-adjointness was obtained \emph{independently} of the spherical symmetry of the potential.
\begin{theorem}[\cite{S72}]
Let $\phi \in L^2_{loc}(\mathbb{R}^3\setminus\{0\},\mathbb{R},\D^3 x)$ be a real-valued function that can be expressed as $\phi=\phi_1+\phi_2$ with $\phi_1 \in C^0(\mathbb{R}^3 \setminus\{0\}, \mathbb{R})$ and $\phi_2 \in L^\infty(\mathbb{R}^3 \setminus \{0\},\mathbb{R},\D^3 x)$  with
\begin{equation}
|\phi_1(\vec x)| \leq \frac{\nu}{|\vec x|}
\end{equation}
and $\nu \in [0,\frac{\sqrt{3}}{2})$. Then $T_0+V$ is essentially self-adjoint.
\end{theorem}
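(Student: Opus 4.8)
The plan is to follow Schmincke's device of writing $T_0+V=(T_0+C)+(V-C)$ for a judiciously chosen singular auxiliary operator $C$, and then to apply the Kato--Rellich theorem with $A=\overline{T_0+C}$ and $B=V-C$. The hypothesis $\phi\in L^2_{\mathrm{loc}}(\mathbb{R}^3\setminus\{0\})$ guarantees that $V$ maps $\mathcal{C}^\infty_c$ into $\mathcal{L}^2$, so $T_0+V$ is a well-defined symmetric operator on the core $\mathcal{C}^\infty_c$. Since $\phi_2\in L^\infty$, the bounded part $\phi_2 I_4$ is $T_0$-bounded with relative bound $0$ and will only contribute to the additive constant in the Kato--Rellich estimate without affecting the relative bound; it can therefore be carried along harmlessly, and the essential point is the singular continuous part $\phi_1$ with $|\phi_1(\vec x)|\le \nu/|\vec x|$.

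For the auxiliary operator I would take a matrix-valued Coulomb-singular term sharing the same leading singularity but with a matrix structure that anticommutes suitably with $\beta$, for instance $C=\frac{\I\lambda}{|\vec x|}\,\beta\,(\boldsymbol\alpha\cdot\hat{\vec x})$ with a real parameter $\lambda$ to be optimised; the factor $\I\beta(\boldsymbol\alpha\cdot\hat{\vec x})$ is Hermitian, so $C$ is symmetric. The first key step is to prove that $T_0+C$ is essentially self-adjoint on $\mathcal{C}^\infty_c$, and here I would pass to the square. A direct computation using $(\boldsymbol\alpha\cdot\vec p)^2=-\Delta$, $\{\boldsymbol\alpha\cdot\vec p,\beta\}=0$, $(\boldsymbol\alpha\cdot\hat{\vec x})^2=I_4$ and $\{\beta,\boldsymbol\alpha\cdot\hat{\vec x}\}=0$ shows that $(T_0+C)^2$ is a Schrödinger-type operator $-\Delta+1+\frac{W}{|\vec x|^2}+(\text{spin--orbit})$, which in each angular channel of the decomposition (\ref{eq:DecompositionHilbert}) reduces to a half-line operator $-\frac{d^2}{dr^2}+\frac{c_\kappa}{r^2}+1$. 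Choosing $\lambda$ so that every effective coefficient meets the limit-point threshold $c_\kappa\ge \frac34$ makes $(T_0+C)^2$ essentially self-adjoint. I would then conclude that $T_0+C$ itself is essentially self-adjoint by the elementary principle that a symmetric operator whose square is essentially self-adjoint is itself essentially self-adjoint (if $(T_0+C)^*\psi=\pm\I\psi$ then $\overline{(T_0+C)^2}\,\psi=-\psi$, which is impossible since $\overline{(T_0+C)^2}\ge0$).

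The second key step is to show that $V-C$ is $(T_0+C)$-bounded with relative bound $a<1$. I would estimate $\|(V-C)u\|$ by combining Hardy's inequality with the positive inverse-square content of $\|(T_0+C)u\|^2$: expanding $\|(T_0+C)u\|^2=\langle u,(T_0+C)^2u\rangle$ exposes a term $\frac{\lambda^2}{|\vec x|^2}$ (together with the spin--orbit terms) that dominates $\|\,|\vec x|^{-1}u\|$ far more tightly than the bare Hardy bound. Optimising $\lambda$ then produces $a<1$ exactly in the range $\nu<\frac{\sqrt3}{2}$; the value is natural because $\frac34=(\frac{\sqrt3}{2})^2$ is precisely the inverse-square limit-point threshold of the first step, matching the channel condition $\nu^2<\kappa^2-\frac14$ with $|\kappa|=1$ from (\ref{eq:LPCAtOrigin}). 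Kato--Rellich then yields that $\overline{(T_0+C)+(V-C)}=\overline{T_0+V}$ is self-adjoint, i.e. $T_0+V$ is essentially self-adjoint.

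The hard part is the sharp relative-bound estimate. A crude triangle-inequality bound such as $\|(V-C)u\|\le(\nu+|\lambda|)\|\,|\vec x|^{-1}u\|$ together with the plain Hardy inequality is far too lossy and yields a relative bound exceeding $1$; one is forced to work at the level of the quadratic form, exploiting the anticommutation relations between $\beta$ and $\boldsymbol\alpha\cdot\hat{\vec x}$ to control the indefinite cross-terms and to choose $\lambda$ optimally. This is precisely where the constant $\frac{\sqrt3}{2}$ is forced, and it is the only genuinely delicate point; the remaining work (the squaring identity, the channel-wise reduction, and verifying that the formal manipulations are legitimate on the core $\mathcal{C}^\infty_c$) is routine.
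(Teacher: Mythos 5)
Your high-level strategy---an intercalary operator $C$, the splitting $T_0+V=(T_0+C)+(V-C)$, and an application of Kato--Rellich---is exactly Schmincke's, which is what the paper sketches. But your concrete choice $C=\frac{\I\lambda}{|\vec x|}\,\beta\,(\boldsymbol\alpha\cdot\hat{\vec{r}})$ makes your first key step provably impossible. In the decomposition (\ref{eq:DecompositionHilbert}) this $C$ acts in the channel labelled $\kappa_j$ exactly like the off-diagonal Coulomb entry of (\ref{eq:RadialDirac}): it replaces $\kappa_j$ by $\kappa_j+\lambda$ \emph{uniformly in all channels} (this is precisely Arai's $b_1$-term recalled in the paper, whose essential self-adjointness condition involves $(\kappa_j+b_1)^2$). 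Squaring the reduced operator gives the half-line coefficients $c_{\kappa_j,\pm}=(\kappa_j+\lambda)(\kappa_j+\lambda\pm1)$, and demanding $c_{\kappa_j,\pm}\ge\frac34$ for both signs is equivalent to $|\kappa_j+\lambda|\ge\frac32$. Since $\kappa_j$ runs over $\pm1,\pm2,\dots$, you would need simultaneously $|1+\lambda|\ge\frac32$ and $|{-1}+\lambda|\ge\frac32$ (plus infinitely many further conditions), and no real $\lambda$ satisfies even these two: the uniform shift improves the channels of one sign of $\kappa_j$ while damaging those of the other sign. In fact already at $\lambda=0$ the scheme fails: $T_0$ itself is essentially self-adjoint on $\mathcal{C}^\infty_c$, yet $T_0^2=-\Delta+1$ restricted to functions vanishing near the origin is \emph{not} (channel-wise, $\kappa_j=1$ gives $c=\kappa_j(\kappa_j-1)=0<\frac34$; equivalently, $-\Delta$ on $C^\infty_c(\mathbb{R}^3\setminus\{0\})$ has nonzero deficiency indices). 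So ``the square is essentially self-adjoint'' is strictly stronger than what you need, and it is unattainable within your family of auxiliary operators.

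The second step is likewise not established. Even after repairing $C$ so that the shift has the favourable sign in every channel (e.g.\ coupling $\lambda$ to the sign of the operator $K$, so that the effective parameter becomes $|\kappa_j|+\lambda$), the estimates you describe give $\Vert(V-C)u\Vert^2\le(\nu+\lambda)^2\Vert\,|\vec x|^{-1}u\Vert^2$, while Hardy's inequality applied to the squared operator yields $\Vert(T_0+C)u\Vert^2\ge\bigl(\lambda+\tfrac12\bigr)^2\Vert\,|\vec x|^{-1}u\Vert^2+\Vert u\Vert^2$; the relative bound is then $<1$ if and only if $\nu<\tfrac12$, \emph{independently of} $\lambda$---no gain over Kato's original argument. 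Your assertion that ``optimising $\lambda$ produces $a<1$ exactly for $\nu<\frac{\sqrt3}{2}$'' is exactly the point that needs proof, and the computation your ansatz actually supports stops at $\tfrac12$. Schmincke's threshold $\frac{\sqrt3}{2}$ does not come from optimising a symmetric inverse-square term: his intercalary operator is $C=\frac14\bigl(a-\frac1r\bigr)\boldsymbol\alpha\cdot\hat{\vec{r}}$ (note: no $\beta$, and containing a bounded part), it enters multiplied by a \emph{complex} number $z$ with $0<|z|<1$ together with a bounded correction $S_2$, so the perturbation $G=V-zC-S_2$ is no longer symmetric, and the conclusion requires a complex Kato--Rellich theorem for closed operators (this is the footnote in the paper). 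The gap is therefore genuine: your first step is impossible as stated, and your second step is asserted rather than proved.
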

The way Schmincke proves its result consists of using the standard Kato-Rellich theorem. He introduces a certain intercalary operator $C$ in order to write $T_0+V=(T_0+C)+(V-C)$ and to regard $V-C$ as a small  perturbation of $T_0+C$. 

More precisely he continues
\begin{equation}
C:=\frac{1}{4}\left(a-\frac{1}{r}\right) \boldsymbol{\alpha} \cdot \hat{ \vec{r}}, \qquad 1 < a < 3
\end{equation}
and $T_0=\boldsymbol\alpha\cdot \vec p+\beta$. He further introduces a bounded operator $S_2$ on which we omit the details. From these definitions it is clear that for $z \in \mathbb{C}$, $0<|z|<1$,
\begin{equation}
T_0+V=(A+\beta+zC)+(V-zC-S_2)+S_2=F+G+S_2.
\end{equation}
With these definitions Schmincke proves that $\Vert Gu\Vert^2 \leq k\Vert Fu \Vert^2$ with $k<1$ and hence $Gu$ is $F$-bounded with a small bound. One can thus apply Kato-Rellich\footnote{Schmincke used a complex version of Kato-Rellich that deals with closed operators instead of self-adjoint ones. This is necessary because, in general, the $z$ appearing in the proof is not real.} to obtain that $T+V+S_2$ is essentially self-adjoint and, since $S_2$ is a bounded operator, this also implies the essentially self-adjointness of $T$.

\subsection{The distinguished self-adjoint extension}\label{subsec:3.2}
As stated in Theorem \ref{th:1}, in the transitory regime there are infinitely many self-adjoint extensions of the minimal Dirac-Coulomb operator. Before considering their classification the main interest throughout the 1970s was the study of a \emph{distinguished} extension characterized by being the most physically meaningful. The first work that introduced this particular self-adjoint extension is due to Schmincke \cite{S72bis} who obtained this extension by means of a multiplicative intercalary operator. This self-adjoint realisation is physically relevant because its domain is contained in the domain of the potential energy form and hence each function on the domain has a finite expectation value of the potential energy operator.

A second and more explicit construction of a distinguished self-adjoint extension of the minimal Dirac-Coulomb operator was found by W\"ust \cite{W75,W77} by means of cut-off potentials. He built a sequence of self-adjoint operators that converges strongly in the operator graph topology to a self-adjoint extension of the minimal Dirac-Coulomb operator. Remarkably that the domain of this self-adjoint extension is also contained in the domain of the potential energy.

At that point it was not clear whether W\"ust's and Schmincke's self-adjoint extensions were the same or not. The first attempt to look for a distinguished self-adjoint extension with a requirement of uniqueness was made by Nenciu \cite{N76} who found that there exists a unique self-adjoint extension of the minimal Dirac operator whose domain is contained in the domain of the kinetic energy form.

In 1979 Klaus and W\"ust \cite{KW79} proved that in the regime $\nu \in (\frac{\sqrt 3}{2},1)$ if the potential $\phi$ is semi-bounded all the above mentioned distinguished self-adjoint extensions coincide.

Let us start with Schmincke's result.
\begin{theorem}
[\cite{S72bis}, Theorems 2 and 3] Let $\phi \in L^2_{loc}(\mathbb{R}^3\setminus\{0\},\mathbb{R},\D^3x)$ be a real-valued function such that $\phi=\phi_1+\phi_2$ with $\phi_1$ and $\phi_2$ both real valued, $\phi_1 \in C^0(\mathbb{R}^3\setminus\{0\})$, and $\phi_2 \in L^\infty(\mathbb{R}^3, \mathbb{R},\D^3x)$. Let $s \in[0,1)$. Suppose there exists $k>1$, $c>1$ and $f \in C^1((0,\infty))$ positive valued and bounded from above by $\frac{1-s}{2c}$ such that
\begin{equation}
\begin{split}
\frac{1}{r^2}\left(f(r)+\frac{s}{2}\right)^2 &\leq k\left(|\phi_1(\vec x)|^2+\frac{1}{r^2}f^2(r)\right) \leq\\
&\leq \frac{1}{r^2}\left(f(r)+\frac{s+1}{2}\right)^2+\frac{1}{r}f'(r) .
\end{split}
\end{equation}
Then there exists a bounded symmetric operator $S$ such that
\begin{equation}
T_G:=\left(\overline{r^{-\frac{s}{2}}}\right)\left(\overline{r^{\frac{s}{2}}(T-S)}\right)+\bar S
\end{equation}
is an essential self-adjoint extension of $T$ and $\forall m \in [\frac{1}{2},1-\frac{s}{2}]$
\begin{equation}
\mathcal{D}(\overline{T}_G)=\mathcal{D}(T^*)\cap \mathcal{D}\left(\overline{r^{-m}}\right).
\end{equation}
\end{theorem}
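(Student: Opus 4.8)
The plan is to realise $T_G$ as a self-adjoint extension of $T$ by a weighted-closure construction, in which the weight tames the Coulomb singularity and the bounded intercalary operator $S$ restores symmetry. Writing $R$ for multiplication by $r^{s/2}$, the point of forming $\overline{R(T-S)}$ before applying $R^{-1}=r^{-s/2}$ is that, since $s\ge0$, the factor $R$ lowers the order of the $1/r$ singularity of $V$; consequently the domain of $\overline{R(T-S)}$ admits exactly those functions whose behaviour at the origin is of order $r^{-s/2}$, i.e.\ the mildly singular elements of $\mathcal D(T^*)$ that do not lie in $\mathcal D(\bar T)$, and reapplying $R^{-1}$ then selects one distinguished extension. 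On the core $\mathcal C^\infty_c$ all weights cancel and $T_G$ coincides with $T$, so $T_G$ is automatically a symmetric extension of $T$ once $S$ is chosen so that $R^{-1}\overline{R(T-S)}+\bar S$ is symmetric.

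First I would extract from the hypothesis the analytic engine of the proof, a weighted Hardy / ground-state representation. With a positive comparison function of the form $\psi=r^{-\gamma}\exp\!\big(\int^r f(\rho)\rho^{-1}\,\D\rho\big)$, and choosing $\gamma$ in accordance with $s$, the standard ground-state identity
\[
\Vert\boldsymbol\alpha\cdot\vec p\,u\Vert^2=\int\psi^2\Big|\nabla\Big(\tfrac{u}{\psi}\Big)\Big|^2\,\D^3x+\int\frac{-\Delta\psi}{\psi}\,|u|^2\,\D^3x
\]
holds, and a direct computation identifies $-\Delta\psi/\psi$ with a combination of the terms $r^{-2}(f+\tfrac s2)^2$ and $r^{-1}f'$ figuring in the statement. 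The two-sided inequality is then exactly the assertion that the potential density $|\phi_1|^2+r^{-2}f^2$ is comparable, with constant $k$, to this quantity; testing it against $|u|^2$ and integrating converts the pointwise bound into the operator estimate I need, controlling $\Vert\,|\phi_1|\,u\Vert$ and $\Vert r^{-1}u\Vert$ in terms of the weighted graph norm of $T-S$.

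With this estimate in hand I would split $R(T-S)=F+G$ into a principal part $F$ whose closure is (essentially) self-adjoint and a remainder $G$, and use the weighted Hardy bound to show $\Vert Gu\Vert\le\theta\Vert Fu\Vert+b\Vert u\Vert$ with $\theta<1$. The constants do the arithmetic here: $c>1$ together with $f\le\frac{1-s}{2c}$ forces $f+\tfrac{s+1}{2}<1$, which keeps the regularised charge strictly subcritical, while $k>1$ provides the two-sided slack, so the relative bound lands strictly below one. The complex, closed-operator version of the Kato-Rellich theorem then gives essential self-adjointness of $T_G$. For the domain I would prove two inclusions. That $\mathcal D(\bar T_G)\subseteq\mathcal D(T^*)$ is automatic from $T_G\subseteq T^*$, and the weight built into the construction bounds $\Vert r^{-m}u\Vert$ by the graph norm for every $m\in[\tfrac12,1-\tfrac s2]$, whence $\mathcal D(\bar T_G)\subseteq\mathcal D(\overline{r^{-m}})$; the endpoint $m=\tfrac12$ is exactly the finite-potential-energy condition $\mathcal D\subseteq\mathcal D(|\vec x|^{-1/2})$ of Theorem \ref{th:1}. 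Conversely, for $u\in\mathcal D(T^*)\cap\mathcal D(\overline{r^{-m}})$ the decay $r^{-m}u\in\mathcal L^2$ is what allows me to realise $R(T-S)u$ as an element of the range of $\overline{R(T-S)}$ through approximation on $\mathcal C^\infty_c$, placing $u$ in $\mathcal D(\bar T_G)$.

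The step I expect to be the main obstacle is making the weighted-closure bookkeeping rigorous: since $R^{-1}R=I$ on smooth functions, one must verify that passing to the closure before reinserting the unbounded factor $R^{-1}$ genuinely enlarges the domain to a self-adjoint one rather than collapsing back onto $\bar T$, and that the adjoint of $R^{-1}\overline{R(T-S)}$ is supported on exactly the weighted domain. This is where the interchange of the multiplications $r^{\pm s/2}$ with the operator closures, and the precise calibration of $S$, must be controlled carefully; the accompanying quantitative difficulty is extracting a relative bound strictly below one out of the delicate commutator term $r^{-1}f'$.
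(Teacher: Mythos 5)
Your proposal has the right raw materials---the weight $r^{s/2}$, Hardy-type inequalities generated by the two-sided condition on $f$, a perturbative estimate with relative bound below one---but it is missing the one idea the proof actually turns on, and you yourself flag that idea as an unresolved ``main obstacle''. The paper proves the theorem by invoking Schmincke's abstract intercalary-operator theorem (Theorem 1 of \cite{S72bis}): if $G$ is a symmetric operator enjoying suitable compatibility properties with $T$, then $T_G:=\overline{G^{-1}}\,\overline{GT}$ is \emph{automatically} an essentially self-adjoint extension of $T$, with its domain identified; the present theorem then follows by taking $G$ to be multiplication by $r^{s/2}$, constructing the bounded symmetric operator $S$, and using the hypotheses on $f$, $k$, $c$ to verify the abstract assumptions (for the pure Coulomb potential this is where the restriction $\nu<1$ enters). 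Your substitute for this mechanism is a Kato--Rellich splitting of $r^{s/2}(T-S)$ into a principal part plus a relatively bounded remainder. That cannot deliver the conclusion: $r^{s/2}(T-S)$ is not symmetric, so the closed-operator (complex) Kato--Rellich theorem---which the paper does describe, but in connection with the \emph{other} Schmincke result, the essential self-adjointness theorem of \cite{S72}---yields at best that $\overline{r^{s/2}(T-S)}$ is closed with a controlled domain. Nothing in your argument converts this closedness into symmetry of $T_G=\left(\overline{r^{-s/2}}\right)\left(\overline{r^{s/2}(T-S)}\right)+\bar S$ on its full domain, let alone into density of $\mathrm{ran}(T_G\pm i)$, which is what essential self-adjointness requires. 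Deferring exactly this passage to a closing paragraph of anticipated ``bookkeeping'' difficulties means the statement to be proved is never proved.

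Two further parts of the statement are also left open. The theorem asserts the \emph{existence} of the bounded symmetric operator $S$; in Schmincke's argument $S$ is constructed explicitly, while in yours it remains a parameter ``to be calibrated'', so the operator $T_G$ you reason about is not actually defined. And the inclusion $\mathcal{D}(T^*)\cap\mathcal{D}\left(\overline{r^{-m}}\right)\subseteq\mathcal{D}(\overline{T}_G)$---the content of Theorem 3 of \cite{S72bis}, and the very reason $T_G$ deserves to be called distinguished---is dispatched by an unspecified ``approximation on $\mathcal{C}^\infty_c$''. In the critical regime this is precisely the delicate point: elements of $\mathcal{D}(T^*)$ cannot in general be approximated by test functions in the graph norm of $T$ (otherwise $T$ would be essentially self-adjoint and there would be nothing to classify), so one must show that the extra decay $r^{-m}u\in\mathcal{L}^2$ makes approximation possible in the \emph{weighted} graph norm; that argument is absent. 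On the positive side, your ground-state-representation reading of the hypothesis on $f$ is a sensible way to produce the weighted Hardy inequalities Schmincke needs, and it is consistent with the paper's remark that for the Coulomb potential the hypotheses amount to $\nu<1$.
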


\begin{remark}
Note that in particular $\mathcal{D}(T_G) \subset \mathcal{D}\left(\overline{r^{-1/2}}\right)$, which physically means that all the functions in the domain of this distinguished self-adjoint extension have a finite expectation of the potential energy.
\end{remark}

Schmincke proved this using a multiplicative intercalary operator. If $T=T_0+V$ with $T_0$ essentially self-adjoint and if there exists a symmetric operator $G$ satisfying suitable properties (see Theorem 1 in \cite{S72bis}), then
\begin{equation}
T_G:=\overline{G^{-1}} \, \overline{GT}
\end{equation}
is an essentially self-adjoint extension of $T$.

Noticeably in the case of Coulomb potential the assumptions of the theorem are satisfied when
\begin{equation}
1-4\nu^2 \leq(1-s^2) \leq 4(1-\nu^2),
\end{equation}
which means $\nu<1$.

W\"ust, instead, showed that given a potential $\phi(\vec x) \in C^0(\mathbb{R}^3 \setminus \{0\})$ such that
\begin{equation}
|\phi(\vec x)| \leq \frac{\nu}{|\vec x|} \qquad |\nu| < 1,
\end{equation}
if one fixes a positive constant $c>0$ and defines
\begin{equation}
V_t(\vec x):=\left\{\begin{array}{l}
V(\vec x) \qquad |\vec x| \geq \frac{c}{t} \\
R(\vec x) \qquad |\vec x|<\frac{c}{t},
\end{array}\right.
\end{equation}
where $R$ is chosen such that the components of $V_t$ are continuous functions. If $V_t(\vec x)$ is definitely monotone, the sequence of operators $T_t=T_0+V_t$ $g$-converges to a self-adjoint operator $T_g$ which is a self-adjoint extension of $T$ with the property that
\begin{equation}
\mathcal{D}(T_g) \subset \mathcal{D}\left(\overline{r^{1/2}}\right).
\end{equation}

In 1976 Nenciu \cite{N76} proposed an alternative distinguished self-adjoint extension $T_N$ by requiring this extension to be the unique with the property that all the functions in its domain have finite kinetic energy, namely
\begin{equation}
\mathcal{D}(T_N) \subset \mathcal{D}(|\bar T_0|^{\frac12}).
\end{equation}
The precise result can be stated as follows.

\begin{theorem}[\cite{N76}, Theorem 5.1]\label{thm:Nenciu}
Let $w(t)$ be a decreasing function on $[0,\infty)$ such that $0 \leq w(t) \leq 1$, $\lim_{t \to \infty} w(t)=0$, $T_0$ and $V$ be a matrix-valued potential.

If
\begin{itemize}
\item[i)] $V(\vec x)=w(|\vec x|)W(\vec x)$ where $W$ is a small perturbation of $T_0$, or
\item[ii)] $V(\vec x)=V_1(\vec x)+V_2(\vec x)$ where $V_1$ is dominated by the Coulomb potential with coupling constant $\nu < 1$ and $V_2=w(|\vec x|)W_2(\vec x)$ where $W_2$ is non-singular,
\end{itemize}
then
\begin{itemize}
\item[i)] there exists a unique operator $T_N$ such that 
\begin{equation}
\mathcal{D}(T) \subset \mathcal{D}(|\overline{T}_0|^{\frac12});
\end{equation}
\item[ii)] $\sigma_{ess}(T) \subset \sigma_{ess}(T_0)$.
\end{itemize}
\end{theorem}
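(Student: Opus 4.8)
The plan is to reduce the construction to a bounded-operator problem by exploiting that the free Dirac operator satisfies $\overline{T}_0^{\,2}=\vec p^{\,2}+1\ge 1$, so that $R:=|\overline{T}_0|^{-1/2}$ is bounded, positive and self-adjoint with bounded inverse. Using the polar decomposition $\overline{T}_0=|\overline{T}_0|^{1/2}\,\mathrm{sgn}(\overline{T}_0)\,|\overline{T}_0|^{1/2}$, where $\mathrm{sgn}(\overline{T}_0)$ is well defined because $0$ lies in the spectral gap, I would look for the distinguished extension in the sandwiched form
\begin{equation}
T_N:=|\overline{T}_0|^{1/2}\big(\mathrm{sgn}(\overline{T}_0)+K\big)|\overline{T}_0|^{1/2},\qquad K:=R\,V\,R,
\end{equation}
with domain $\mathcal{D}(T_N):=\{\,u\in\mathcal{D}(|\overline{T}_0|^{1/2}) : (\mathrm{sgn}(\overline{T}_0)+K)|\overline{T}_0|^{1/2}u\in\mathcal{D}(|\overline{T}_0|^{1/2})\,\}$. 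On $\mathcal{C}^\infty_c$ this reproduces $T_0u+Vu$, so $T_N$ extends $T$, and by construction $\mathcal{D}(T_N)\subset\mathcal{D}(|\overline{T}_0|^{1/2})$, which is exactly the finite-kinetic-energy requirement singled out in the statement.

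Existence then amounts to showing that $M:=\mathrm{sgn}(\overline{T}_0)+K$ is bounded, self-adjoint and boundedly invertible: if so, $T_N$ is symmetric with everywhere-defined bounded self-adjoint inverse $R\,M^{-1}R$, hence self-adjoint. Under hypothesis (i), $W$ is $T_0$-bounded with relative bound $<1$ and $0\le w\le 1$, so $V$ is $T_0$-bounded with bound $<1$; here $\|K\|<1$ follows readily, $\mathrm{sgn}(\overline{T}_0)$ has spectrum $\{-1,+1\}$ so $M$ is invertible, and in fact $T_0+V$ is already essentially self-adjoint with $\mathcal{D}(T_N)=\mathcal{D}(\overline{T}_0)$. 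Under hypothesis (ii) the singular part $V_1$ is only Coulomb-dominated with $\nu<1$, and the crude operator-norm bound on $R\,|\vec x|^{-1}R$ does not reach the full range; to keep $M$ boundedly invertible up to $\nu<1$ I would instead invoke the sharp relativistic Hardy (Hardy--Dirac) inequality, most transparently obtained by eliminating the lower spinor component and working with the resulting coercive form for the upper one, whose positivity threshold is exactly $\nu=1$. The bounded decaying remainder $V_2=wW_2$ is then absorbed as a relatively small perturbation.

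For uniqueness I would argue through the boundary (symplectic) form $\omega(u,v):=\langle T^*u,v\rangle-\langle u,T^*v\rangle$ on $\mathcal{D}(T^*)$, which encodes the defect of symmetry. The key claim is that $\omega$ vanishes identically on $\mathcal{D}(T^*)\cap\mathcal{D}(|\overline{T}_0|^{1/2})$: channel by channel for the radial operator \eqref{eq:RadialDirac} the two deficiency modes behave like $r^{\pm\sqrt{\kappa_j^2-\nu^2}}$ near the origin, and membership in $\mathcal{D}(|\overline{T}_0|^{1/2})$ excludes the more singular one, leaving only boundary data on which $\omega$ is zero. Granting this, $S:=T^*$ restricted to $\mathcal{D}(T^*)\cap\mathcal{D}(|\overline{T}_0|^{1/2})$ is symmetric; since every self-adjoint extension $\widehat T$ of $T$ with $\mathcal{D}(\widehat T)\subset\mathcal{D}(|\overline{T}_0|^{1/2})$ satisfies $\widehat T\subset S$, and self-adjoint operators are maximal symmetric, necessarily $\widehat T=S$. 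Hence such an extension is unique and coincides with $T_N$.

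Finally, for $\sigma_{ess}(T_N)\subset\sigma_{ess}(\overline{T}_0)$ I would show compactness of the resolvent difference and invoke Weyl's theorem. In the sandwiched representation $(T_N-z)^{-1}=R(M-zR^2)^{-1}R$ and $(\overline{T}_0-z)^{-1}=R(\mathrm{sgn}(\overline{T}_0)-zR^2)^{-1}R$, so by the second resolvent identity the difference equals $-R(M-zR^2)^{-1}K(\mathrm{sgn}(\overline{T}_0)-zR^2)^{-1}R$, which carries the factor $K=RVR$ between bounded operators. Since $R=(\vec p^{\,2}+1)^{-1/4}$ supplies Sobolev smoothing while $V$ both decays at infinity (as $w(|\vec x|)\to 0$) and has a locally square-integrable Coulomb singularity, a Kato--Seiler--Simon estimate yields compactness of this difference. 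I expect the main obstacle to lie in two places: the verification that $\omega$ annihilates all of $\mathcal{D}(T^*)\cap\mathcal{D}(|\overline{T}_0|^{1/2})$ --- the only point where the abstract finite-kinetic-energy hypothesis must be matched against the explicit $r^{\pm\sqrt{\kappa_j^2-\nu^2}}$ asymptotics --- and the sharp Hardy--Dirac bound needed to retain invertibility of $M$ across the entire critical range $\nu<1$ rather than merely a suboptimal subinterval.
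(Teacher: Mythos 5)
The paper itself records only that Nenciu's proof ``relies on a variant of Lax--Milgram lemma'' and is non-constructive, and your sandwiched representation $T_N=|\overline{T}_0|^{1/2}\bigl(\mathrm{sgn}(\overline{T}_0)+K\bigr)|\overline{T}_0|^{1/2}$, $K=|\overline{T}_0|^{-1/2}V|\overline{T}_0|^{-1/2}$, is exactly the operator-theoretic face of that scheme, so your skeleton is the right one. The trouble is that the central analytic step --- bounded invertibility of $M=\mathrm{sgn}(\overline{T}_0)+K$ on the \emph{whole} range $\nu<1$ --- is never carried out. Kato's inequality $|\vec x|^{-1}\leq\frac{\pi}{2}|\vec p\,|$ gives only $\Vert K\Vert\leq\frac{\pi}{2}\nu$, so norm-smallness stops at $\nu<2/\pi$, short even of $\frac{\sqrt3}{2}$; and your appeal to the sharp Hardy--Dirac inequality is an IOU, not a proof: that inequality is a positivity statement for a reduced (upper-spinor) quadratic form, while what you need is $0\notin\sigma(M)$ for an operator that is \emph{not} sign-definite (since $\mathrm{sgn}(\overline{T}_0)$ has both signs). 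Converting coercivity of a non-semibounded form into invertibility is precisely the ``variant of Lax--Milgram'' that constitutes Nenciu's argument, and nothing in your sketch supplies it. A smaller but real issue: in case (i) you read ``small perturbation'' in the Kato--Rellich operator sense and conclude essential self-adjointness with $\mathcal{D}(T_N)=\mathcal{D}(\overline{T}_0)$; Nenciu's smallness is in the \emph{form} sense, under which essential self-adjointness can genuinely fail --- otherwise case (i) would be vacuous.

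Two further steps fail or are too narrow as written. First, the essential-spectrum argument: $K=RVR$ is \emph{not} compact for a Coulomb singularity. Under the dilation test $u_\lambda(\vec x)=\lambda^{3/2}u(\lambda\vec x)$, $\lambda\to\infty$, one has $\langle u_\lambda,Ku_\lambda\rangle\to\nu\,\langle |\vec p\,|^{-1/2}u,|\vec x|^{-1}|\vec p\,|^{-1/2}u\rangle\neq0$ while $u_\lambda\rightharpoonup0$, because $|\vec p\,|^{-1/2}|\vec x|^{-1}|\vec p\,|^{-1/2}$ is dilation-invariant and a nonzero dilation-invariant operator is never compact; sandwiching a non-compact $K$ between merely bounded factors proves nothing, and Kato--Seiler--Simon is inapplicable since $|\vec x|^{-1}$ belongs to no strong $L^p(\mathbb{R}^3)$. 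The repair is to write the resolvent difference as $-(T_N-z)^{-1}V(T_0-z)^{-1}$, factor $V=|V|^{1/2}\,\mathrm{sgn}(V)\,|V|^{1/2}$, and use boundedness of $(T_N-z)^{-1}|V|^{1/2}$ (which follows from boundedness of $|\vec x|^{-1/2}|\overline{T}_0|^{-1/2}$) together with compactness of $|V|^{1/2}(T_0-z)^{-1}$ (near/far splitting plus KSS/Cwikel, now legitimate for $|\vec x|^{-1/2}$). Second, uniqueness: your symplectic boundary-form argument is channel-by-channel, hence it requires spherical symmetry and the explicit $r^{\pm\sqrt{\kappa_j^2-\nu^2}}$ asymptotics of (\ref{eq:RadialDirac}), whereas hypothesis (ii) allows a non-symmetric $V_1$ merely dominated by a Coulomb potential. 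The form-theoretic route --- any self-adjoint extension $\widehat T$ with $\mathcal{D}(\widehat T)\subset\mathcal{D}(|\overline{T}_0|^{1/2})$ satisfies $\langle\widehat Tu,v\rangle=\langle |\overline{T}_0|^{1/2}u,M|\overline{T}_0|^{1/2}v\rangle$ for $v\in\mathcal{C}^\infty_c$, which extends by form-density to all $v\in\mathcal{D}(T_N)$, so $\widehat T\subset T_N^*=T_N$ and equality follows by maximal symmetry --- needs no symmetry at all and is the one consistent with Nenciu's setting. Your final maximality step is correct; what is missing is that its input (vanishing of the boundary form on all of $\mathcal{D}(T^*)\cap\mathcal{D}(|\overline{T}_0|^{1/2})$) is only established in the spherically symmetric case.
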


The proof relies on a variant of Lax-Milgram lemma and has the inconvenience not to be constructive.

In 1979 Klaus and W\"ust \cite{KW79} showed that in the case of semi-bounded potential W\"ust's and Nenciu's distinguished extensions actually coincide. This is an interesting fact both from a physical and from a mathematical point of view. Physically this coincidence means that the distinguished self-adjoint extension has the property of being the only one whose functions in the domain have finite potential \emph{and} kinetic energy. From a mathematical point of view this overcomes the fact that Nenciu's method was not constructive and provides instead an explicit expression for its self-adjoint extension in terms of $g$-limit of $T_t$.

The identification of a certain \emph{distinguished extension} was pushed further by Esteban and Loss \cite{EL08} up to the value $\nu=1$. In that paper they proposed to define the distinguished self-adjoint extension via Hardy-Dirac inequalities. By a limit argument this procedure can define a sort of distinguished self-adjoint extension also when $\nu=1$ but, in that case, the domain of this self-adjoint extension will be neither contained in the domain of the kinetic energy form nor in the domain of the potential energy form. In a subsequent work, Arrizabalaga \cite{A11} weakened further  the hypothesis on the construction of the self-adjoint extension of Esteban and Loss.

\section{Classification of the self-adjoint extensions}\label{sec:Hogeve}
In the previous section we discussed the distinguished extension of the minimal Dirac-Coulomb operator with respect to an infinite multiplicity of others. We come now to the major problem of classifying the one-parameter family of self-adjoint extensions of such an operator.

There is essentially one work, by Hogreve \cite{H13}, that deals systematically with this problem. There the classification is made by means of von Neumann's extension theory. We start by recalling von Neumann's theorem on general parametrization of symmetric extension.

\begin{theorem}[von Neumann] 
Let $T$ be a densely defined, closed and symmetric operator. The closed symmetric extensions of $T$ are in one to one correspondence with the set of partial isometries (in the usual inner product) of $\ker(T^*+i)$ into $\ker(T^*-i)$. If $U$ is such an isometry with initial space $I(U) \subseteq \ker(T^*+i)$, then the corresponding closed symmetric extension $T_U$ has domain
\begin{equation}
 D(T_U) =\{\varphi + \varphi^{(i)}+U\varphi^{(i)} \, | \, \varphi \in D(T), \, \varphi^{(i)} \in I(U) \}
\end{equation}
and
\begin{equation}
 T_U(\varphi+\varphi^{(i)}+U\varphi^{(i)})=T\varphi+i\varphi^{(i)}-i U \varphi^{(i)}.
\end{equation}
If $\dim I(U) < \infty$, the deficency indices of $T_U$ are
\begin{equation}
 n_{\pm}(T_U)=n_{\pm}(T)-\dim[I(U)].
\end{equation}
\end{theorem}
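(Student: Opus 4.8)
The plan is to reduce everything to the \emph{first von Neumann formula}, namely the orthogonal decomposition of $\mathcal{D}(T^*)$ with respect to the graph inner product $\langle u,v\rangle_* := \langle u,v\rangle + \langle T^*u, T^*v\rangle$. The first step is to prove
\[
\mathcal{D}(T^*) = \mathcal{D}(T)\oplus\ker(T^*-i)\oplus\ker(T^*+i),
\]
the summands being mutually $\langle\cdot,\cdot\rangle_*$-orthogonal. Each summand is closed in the graph norm ($\mathcal{D}(T)$ because $T$ is closed, the deficiency spaces as kernels of closed operators), and orthogonality is a short computation using $T^*\phi_\pm=\pm i\phi_\pm$ on $\ker(T^*\mp i)$ together with $\langle T\phi,\phi_\pm\rangle=\langle\phi,T^*\phi_\pm\rangle$. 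Completeness of the sum I would deduce from $\ker(T^*\mp i)=\ran(T\pm i)^\perp$ and the closedness of $\ran(T\pm i)$: any $\psi$ that is $\langle\cdot,\cdot\rangle_*$-orthogonal to all three pieces is forced to satisfy $(T^{*2}+1)\psi=0$ and hence to vanish.

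Next I would identify the closed symmetric extensions. Any such $S$ obeys $T\subseteq S\subseteq T^*$, so it is the restriction of $T^*$ to a graph-closed subspace $\mathcal{D}(S)$ sandwiched between $\mathcal{D}(T)$ and $\mathcal{D}(T^*)$; by the decomposition every $u\in\mathcal{D}(S)$ is uniquely $u=\phi+\phi^{(i)}+\phi'$ with $\phi\in\mathcal{D}(T)$, $\phi^{(i)}\in\ker(T^*+i)$, $\phi'\in\ker(T^*-i)$. The crux is the boundary form $B(u,u):=\langle T^*u,u\rangle-\langle u,T^*u\rangle$: since $T$ is symmetric the $\mathcal{D}(T)$-contribution cancels and one computes $B(u,u)=2i(\|\phi^{(i)}\|^2-\|\phi'\|^2)$. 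Symmetry of $S$ is therefore equivalent to $B\equiv 0$ on $\mathcal{D}(S)$, i.e. $\|\phi'\|=\|\phi^{(i)}\|$ for every admissible pair, which is precisely the statement that the assignment $\phi^{(i)}\mapsto\phi'$ prescribed by $\mathcal{D}(S)$ is a partial isometry $U$ with initial space $I(U)\subseteq\ker(T^*+i)$ mapping into $\ker(T^*-i)$. This forces the stated form of $\mathcal{D}(T_U)$, while the action formula is just $T^*$ read off componentwise; closedness of $T_U$ is automatic, since $\mathcal{D}(T_U)=\mathcal{D}(T)\oplus\{\phi^{(i)}+U\phi^{(i)}\}$ is a graph-orthogonal sum of graph-closed subspaces. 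The converse direction together with injectivity of $U\mapsto T_U$ (recover $U$ from the deficiency components of $\mathcal{D}(T_U)$) then closes the bijection.

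Finally, for the deficiency indices I would compute the adjoint $T_U^*$ — itself a restriction of $T^*$, whose domain consists of the $v\in\mathcal{D}(T^*)$ annihilating $\mathcal{D}(T_U)$ in the boundary pairing — and describe $\ker(T_U^*\mp i)$ as the portion of $\ker(T^*\mp i)$ left unpaired by $U$, that is the orthogonal complement of $\ran U$ (resp.\ of $I(U)$) inside the relevant deficiency space. Since a partial isometry with $\dim I(U)<\infty$ consumes exactly $\dim I(U)$ dimensions from each of the two deficiency spaces, one obtains $n_\pm(T_U)=n_\pm(T)-\dim I(U)$.

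I expect the main obstacle to be twofold and essentially bookkeeping-heavy. The \emph{completeness} half of the first von Neumann formula must be established cleanly (the orthogonality half is immediate), and, above all, the explicit determination of $T_U^*$ and its deficiency subspaces requires care, since one has to track exactly which directions in $\ker(T^*\mp i)$ stay orthogonal to the graph of $U$. It is precisely here that the sign of the boundary form and the labelling of $\ker(T^*\pm i)$ are most easily mishandled, so I would fix those conventions once and for all at the start.
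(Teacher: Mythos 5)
The paper itself contains no proof of this theorem: it is recalled as von Neumann's classical parametrisation (essentially Theorem X.2 of Reed--Simon) and then immediately applied to the radial operators $t_{m_j,\kappa_j}$, so there is nothing in the text to compare your argument against. Judged on its own, your proposal is the standard and correct proof, and all of its steps go through: the graph-orthogonal first von Neumann decomposition $\mathcal{D}(T^*)=\mathcal{D}(T)\oplus\ker(T^*-i)\oplus\ker(T^*+i)$ (your completeness argument works: $*$-orthogonality to $\mathcal{D}(T)$ alone already forces $T^*\psi\in\mathcal{D}(T^*)$ with $(T^{*2}+1)\psi=0$, hence $\psi=\frac{1}{2i}\bigl[(T^*+i)\psi-(T^*-i)\psi\bigr]$ lies in $\ker(T^*-i)+\ker(T^*+i)$, and the remaining orthogonalities kill it); the reduction of symmetry of an intermediate extension $T\subseteq S\subseteq T^*$ to the vanishing of the boundary form; closedness of $T_U$ from the graph-orthogonal sum of graph-closed pieces; and the index count via $\ker(T_U^*\mp i)$ being the orthogonal complements of $\ran U$ and of $I(U)$ inside the respective deficiency spaces (equivalently, via $\ran(T_U+i)=\ran(T+i)\oplus\ran U$ and $\ran(T_U-i)=\ran(T-i)\oplus I(U)$). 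Two points deserve to be made explicit when you write it up: symmetry of $S$ gives the vanishing of the \emph{polarised} form $B(u,v)$, not merely of $B(u,u)$, and it is the polarised identity $\langle\phi^{(i)},\psi^{(i)}\rangle=\langle\phi',\psi'\rangle$ that shows the assignment $\phi^{(i)}\mapsto\phi'$ is single-valued and linear (if $(0,\phi')$ occurs among the deficiency pairs then $\phi'=0$), rather than just norm-preserving pair by pair.

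Your closing worry about labelling conventions is well placed, and your derivation exposes a genuine inconsistency in the statement as printed: with the paper's labels ($\varphi^{(i)}\in I(U)\subseteq\ker(T^*+i)$, so $T^*\varphi^{(i)}=-i\varphi^{(i)}$), reading off $T^*$ componentwise yields $T_U(\varphi+\varphi^{(i)}+U\varphi^{(i)})=T\varphi-i\varphi^{(i)}+iU\varphi^{(i)}$, whereas the displayed formula has the opposite signs. The printed action formula belongs to the Reed--Simon convention, in which the initial space of $U$ sits in $K_+=\ker(T^*-i)$; the paper has swapped the deficiency spaces without swapping the signs. Fixing the convention once at the start, as you propose, resolves this; your argument is otherwise internally consistent (your sign for $B(u,u)$ corresponds to the inner product being linear in the second entry, which is fine as long as it is declared).
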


Recalling that if $\nu^2 > \kappa_j^2 -\frac{1}{4}$ the deficency indices of $t_{m_j,\kappa_j}$ are $(1,1)$, the isometries are just phases: given $\theta \in [0, 2\pi)$ we have
\begin{equation}
\begin{array}{ccccc}
U_\theta & : &\ker(T^*+i) &\to& \ker(T^*-i)\\
		 &   &  \psi^{(i)} & \mapsto & e^{i\theta} \psi^{(-i)}, \\
\end{array}
\end{equation}
and hence at every value of $\theta$ there corresponds one self-adjoint extension of the minimal operator $t_{m_j,\kappa_j}$.  

Let $\psi(r)=(\psi_1(r), \psi_2(r)) \in AC((0,\infty),\mathbb{C}^2)$. We define
\begin{equation}
(\Theta_\theta \psi)(r)=(\psi_2^{(-i)}(r)+e^{i\theta} \psi_2^{(i)}(r), -\psi_1^{(-i)}(r)-e^{i\theta}\psi_1^{(i)}(r)) \cdot \left(\begin{array}{c}
\psi_1(r) \\ \psi_2(r) \end{array}\right).
\end{equation}

\begin{theorem}[\cite{H13}, Theorem 7.1]\label{thm:Hogeve}
Let $|\nu| > \sqrt{\kappa_j^2 -\frac{1}{4}}$, the self-adjoint extensions $t_{m_j,\kappa_j}^\theta$ of the minimal operator $t_{m_j,\kappa_j}$ of (\ref{eq:RadialDirac}) are uniquely determined by $\theta \in [0,2\pi)$ via the formulas
\begin{multline}\label{eq:DomainSA}
\mathcal{D}(t_{m_j,\kappa_j}^\theta)=\{ \psi \in L^2((0,\infty),\mathbb{C}^2) \cap AC((0,\infty),\mathbb{C}^2) \, | \, \lim_{r \to 0} (\Theta_\theta \psi)(r) , \\
 t_{m_j,\kappa_j} \psi \in L^2((0,\infty)) \}
\end{multline}
\begin{equation}
t_{m_j,\kappa_j}^\theta \psi = t_{m_j,\kappa_j}^* \psi.
\end{equation}
\end{theorem}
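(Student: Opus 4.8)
Throughout write $t$ for the radial operator $t_{m_j,\kappa_j}$ of (\ref{eq:RadialDirac}). The plan is to apply von Neumann's theorem and then translate its abstract parametrization of the extension domain into the single Wronskian-type boundary condition at the origin encoded by $\Theta_\theta$. In the regime $|\nu| > \sqrt{\kappa_j^2 - \tfrac14}$ the endpoint analysis carried out in the proof of the deficiency-index theorem shows that $t$ is in the limit point case at $\infty$ and in the limit circle case at $0$, so the deficiency indices are $(1,1)$; consequently $\ker(t^* + i)$ and $\ker(t^* - i)$ are each one-dimensional and the partial isometries between them reduce to the phases $e^{i\theta}$, $\theta\in[0,2\pi)$, as recorded by the map $U_\theta$.

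First I would make the adjoint explicit: for a first-order Dirac-type system on $(0,\infty)$ the operator $t^*$ acts by the same differential expression on the maximal domain $\{\psi\in L^2\cap AC \mid t\psi\in L^2\}$, which is where the $AC$ regularity in the target domain comes from. I then fix generators $\psi^{(i)}\in\ker(t^*+i)$ and $\psi^{(-i)}\in\ker(t^*-i)$, i.e. the globally square-integrable solutions of $t^*\psi = \mp i\psi$ (their existence and uniqueness up to a scalar being guaranteed by the limit point case at $\infty$); since the differential expression has real coefficients one may take $\psi^{(-i)} = \overline{\psi^{(i)}}$, and it is this relation that lets $\Theta_\theta$ be written with no explicit conjugations. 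Von Neumann's theorem now gives, for each phase, the extension $t^\theta$ with domain $\mathcal{D}(\bar t)\dotplus\mathrm{span}\{\psi^{(i)} + e^{i\theta}\psi^{(-i)}\}$ acting as a restriction of $t^*$; this already yields the stated action $t^\theta\psi = t^*\psi$.

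The heart of the proof is recasting this direct-sum description as the vanishing of a boundary bracket. Integrating the Lagrange identity for (\ref{eq:RadialDirac}) and using that the off-diagonal entries are $\pm\,d/dr$, the boundary sesquilinear form collapses to $[\psi,\phi](r) = \psi_1(r)\overline{\phi_2(r)} - \psi_2(r)\overline{\phi_1(r)}$, evaluated as $r\to 0$ and $r\to\infty$. At $\infty$ the limit point case forces $[\psi,\phi](\infty)=0$ for all $\psi,\phi\in\mathcal{D}(t^*)$, so only the origin contributes; at $0$ the limit circle case makes $[\cdot,\cdot](0)$ a nondegenerate skew-Hermitian pairing on the two-dimensional quotient $\mathcal{D}(t^*)/\mathcal{D}(\bar t)$. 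Hence $\mathcal{D}(\bar t)$ is exactly $\{\phi\in\mathcal{D}(t^*) \mid [\phi,v](0)=0\ \forall v\in\mathcal{D}(t^*)\}$, and a self-adjoint extension is precisely a maximal isotropic (here one-dimensional) subspace for this pairing.

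Finally I would identify the reference function. The von Neumann combination $\chi_\theta := \psi^{(i)} + e^{i\theta}\psi^{(-i)}$ is isotropic, $[\chi_\theta,\chi_\theta](0)=0$, so the scalar condition $\lim_{r\to 0}[\psi,\chi_\theta](r)=0$ cuts out a codimension-one subspace of $\mathcal{D}(t^*)$ containing both $\mathcal{D}(\bar t)$ and $\chi_\theta$; by nondegeneracy this subspace is exactly the von Neumann domain. A short component computation, using $\psi^{(-i)}=\overline{\psi^{(i)}}$, then shows that $[\psi,\chi_\theta](r)$ equals $(\Theta_\theta\psi)(r)$ up to the harmless relabeling $\theta\mapsto-\theta$ (since $\Theta_\theta$ pairs $\psi$ against $\overline{\chi_\theta}$), which delivers the domain description (\ref{eq:DomainSA}). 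The \emph{main obstacle} is precisely this last translation: one must verify that the sesquilinear boundary bracket, evaluated against the specific combination $\chi_\theta$ and simplified via the conjugation symmetry, collapses to the bilinear expression defining $\Theta_\theta$, and that the single condition $\lim_{r\to0}(\Theta_\theta\psi)(r)=0$ cuts out exactly the one extra dimension predicted by the deficiency index — neither over- nor under-determining the extension.
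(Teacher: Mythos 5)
Your proposal is correct, and while it shares the paper's starting point — von Neumann's theorem plus the fact that in this regime $t_{m_j,\kappa_j}$ is limit circle at $0$ and limit point at $\infty$, so the deficiency indices are $(1,1)$ and each extension domain is $\mathcal{D}(\bar t)\dotplus\mathbb{C}\,\chi_\theta$ with $\chi_\theta=\psi^{(i)}+e^{i\theta}\psi^{(-i)}$ — it carries out the crucial translation into a boundary condition by a genuinely different route. The paper argues computationally and only in one direction: it writes $\psi=\phi+c\,\chi_\theta$ with $\phi$ in the minimal domain, solves for $c$ separately from each of the two components, lets $r\to0$ using that minimal-domain functions vanish at the origin, and reads $\lim_{r\to0}(\Theta_\theta\psi)(r)=0$ off the equality of the two resulting expressions. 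You instead run the boundary-form argument: the Lagrange identity for the system (\ref{eq:RadialDirac}) collapses to the bracket $[\psi,\phi](r)=\psi_1\overline{\phi_2}-\psi_2\overline{\phi_1}$, the limit point case kills the contribution at infinity, the bracket descends to a nondegenerate pairing on the two-dimensional quotient $\mathcal{D}(t^*)/\mathcal{D}(\bar t)$, and since $\chi_\theta$ is isotropic the kernel of $\psi\mapsto\lim_{r\to0}[\psi,\chi_\theta](r)$ is exactly $\mathcal{D}(\bar t)+\mathbb{C}\,\chi_\theta$. This buys you both inclusions of the domain identity at once, avoids dividing by the components $\chi_n(r)$ (the paper never checks that these denominators are non-vanishing, nor that minimal-domain functions vanish fast enough relative to them), and makes transparent, via the codimension count, why a single scalar condition determines the extension. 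Your observation that the conjugation symmetry $\psi^{(-i)}=\overline{\psi^{(i)}}$ is what converts the sesquilinear bracket into the bilinear expression defining $\Theta_\theta$ (up to the harmless relabeling $\theta\mapsto-\theta$) also resolves a point the paper leaves implicit: its von Neumann decomposition uses $\psi^{(i)}+e^{i\theta}\psi^{(-i)}$ while its formula for $c$ pairs against $\psi^{(-i)}+e^{i\theta}\psi^{(i)}$. What the paper's shorter computation buys is an explicit identification of the coefficient $c$ as the common limit of $\psi_n(r)/\chi_n(r)$; as a proof, however, your version is the more complete and robust of the two.
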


\begin{proof}We prove only one direction of the theorem.

By  von Neumann's theorem above and the explicit formula for the unitary transformation we have
\begin{equation}
\mathcal{D}(t_{m_j,\kappa_j}^\theta)=\mathcal{D}\left(\overline{t_{m_j,\kappa_j}^\theta}\right)+\{c(\psi^{(i)}+e^{i\theta}\psi^{(-i)}) \, | \, c \in \mathbb{C} \}
\end{equation}
and hence by taking the limit $r \to 0$ one gets
\begin{equation}
c=\lim_{r \to 0} \frac{\psi_n(r)-\phi_n(r)}{\psi_n^{(-i)}+e^{i \theta} \psi_n^{(i)}(r)} 
\end{equation}
with $n=1,2$. This implies that taking into account that $\psi_n\to 0$, for $r \to 0$ the quantity with $n=1$ equals the one with $n=2$ and this is precisely the condition $\lim_{r \to 0}(\Theta_\theta \psi)=0$.
\end{proof}

\section{Future perspectives: a selection of main open problems}\label{sec:Future}
In the final part of these notes we survey a few topical questions concerning the multiplicity of self-adjoint realisations of the model.
\begin{itemize}
\item[i)] \emph{Characterisation of $\mathcal{D}(t^\theta_{m_j,\kappa_j})$}. The sole characterisation of the domains of the self-adjoint extensions present in the literature, namely (\ref{eq:DomainSA}) above, does not give any explicit detail on the behaviour of the functions near the origin. More refined information on this short scale behaviour is expected to be achievable by means of the self-adjoint extension theory of Kre\u{\i}n-Vi\v{s}ik-Birman (KVB) (see for example \cite{M17}).
\item[ii)] \emph{Adaptation of the original extension formulas} for initial operators that are not semi-bounded (as is the case for $t_{m_j, \kappa_j}^\theta$). The original KVB theory is developed in order to classify the self-adjoint extensions of a semi-bounded operator. An operator version of the extension formula for non semi-bounded operators with a spectral gap can be found in \cite{Grubb1,Grubb2} but, to our knowledge, a similar theorem for the corresponding quadratic forms is not available in the literature.
\item[iii)] \emph{Qualification of further features of the domain of the distinguished extension}. Beside the huge amount of studies concerning the domain of the distinguished extension (see for example \cite{A11,ADV13}), the available knowledge on such operator remains somewhat implicit. Among the other informations, one would like to qualify the most singular behaviour at zero of the generic element of the domain, and how this behaviour may depend on the magnitude of the coupling constant $\nu$.
\item[iv)] \emph{General classification of the extensions both in the operator sense and in the quadratic form sense}, where the effectiveness of the classification relies in the possibility of qualifying special subclasses of interest (e.g. invertible ones). In particular, it would be of relevance to reproduce, in analogy to what happens for semi-bounded operators, the natural ordering of the quadratic forms.
\item[v)] \emph{Study of the spectral properties of the generic extension}, with particular focus on the discrete spectrum lying in $[-1,1]$. For example, one would like to identify the self-adjoint realisation with the highest number of eigenvalues or the one with the lowest eigenvalue or one could even try to identify the lowest possible (absolute value of the) eigenvalue among the extensions.
\item[vi)] \emph{Identification of an analogous notion of distinguished extension in the regime $\nu > 1$}. The reason for which if $\nu > 1$ there is no self-adjoint realisation with the property that its domain is contained in the domain of the potential energy form can be seen with the decomposition (\ref{eq:DecompositionHilbert}). If $\nu>1$ the functions in the domain of the reduced operator in the sector with $j=\frac{1}{2}$ do not vanish at $r=0$. In all the other sectors, however, this is not the case. It is thus possible to prove the existence of a special self-adjoint realisation of the reduced Dirac operator in the sectors with $j\geq \frac{3}{2}$ that retains most of the properties of the distinguished extension in the regime $\nu \in [\frac{\sqrt3}{2}, 1)$.
\end{itemize}

%\begin{acknowledgement}
%This work is partially supported by INdAM (GNFM)	 and by the 2014-2017 MIUR-FIR grant \emph{Cond-Math: Condensed Matter and Mathematical Physics} code RBFR13WAET.
%\end{acknowledgement}

\end{document}